\newtheorem{thm}{Theorem}
\newtheorem{Def}{Definition}
\newtheorem{rem}{Remark}
\newtheorem{coro}{Corollary}
\newtheorem{lemma}{Lemma}
\newcommand{\EE}[1]{{\text{\normalsize$\mathbb E$}}\left(#1\right)}
\newcommand{\tr}[1]{\text{Tr}\left(#1\right)}
\title{On the generalization of linear least mean squares estimation to quantum systems with non-commutative outputs}
\author{Nina H~Amini
\thanks{Edward L. Ginzton Laboratory, Stanford University, Stanford, CA 94305, USA.}
\thanks{CNRS, Laboratoire des signaux et syst\`{e}mes (L2S), CentraleSup\'{e}lec, 3 rue Joliot-Curie, 91192 Gif-Sur-Yvette, France, {\tt nina.amini@lss.supelec.fr}.}
\and  Zibo Miao
\thanks{ARC Centre for Quantum Computation and Communication Technology, Research School of Engineering, Australian National University, {\tt\small zibo.miao@anu.edu.au}.}
\and Yu Pan
\thanks{ ARC Centre for Quantum Computation and Communication Technology, Research School of Engineering, Australian National University, {\tt\small yu.pan@anu.edu.au}.}
\and Matthew R James
\thanks{ARC Centre for Quantum Computation and Communication Technology, Research School of Engineering, Australian National University, {\tt\small matthew.james@anu.edu.au}.}
\and Hideo Mabuchi
\thanks{Edward L. Ginzton Laboratory, Stanford University, Stanford, CA 94305, USA, {\tt\small hmabuchi@stanford.edu}.}}
\date{}
\begin{document}
\maketitle


\begin{abstract}
 The purpose of this paper is to study the problem of generalizing the Belavkin-Kalman filter to the case where the classical measurement signal is replaced by a fully quantum non-commutative output signal. We formulate a least mean squares estimation problem that involves a non-commutative system as the filter processing the non-commutative output signal. We solve this estimation problem within the framework of non-commutative probability. Also, we find the necessary and sufficient conditions which make these non-commutative estimators physically realizable. These conditions are restrictive in practice.
\end{abstract}

\bf Keyword: \rm Linear quantum stochastic differential equations (QSDEs), quantum noises, Kalman filtering, physical realizability, linear least mean squares estimators, non-commutative outputs, coherent observers.

\section{Introduction}
Quantum filtering theory as a fundamental theory in quantum optics, which was implicit in the
   work of Davies in the 1960s~\cite{davies1969quantum,davies1976quantum} concerning open quantum systems and generalized measurement theory, and culminating in the general theory developed and initiated  by Belavkin during the 1980s \cite{Belavkinone,belavkin2004towards,belavkin1992quantum}. 
   The quantum filter is a stochastic differential equation for the conditional state, from which the best estimates of the system observables may be obtained. In related work by Carmichael, the quantum filter is referred to as the stochastic master equation~\cite{HC93,WM10}.
   
One application of the quantum filter, or variants of it,  is in measurement feedback control~\cite{VPB83,EB05,J05,WM10,JK10,sayrin2011real,amini2013feedback}. As in classical control theory, optimal measurement feedback control strategies may be expressed as functions of information states, of which the conditional state is a particular case~\cite{J05,MJ12}. However, feedback control of quantum systems need not involve measurements, and indeed the topic of coherent quantum  feedback is evolving~\cite{NJP09,mabuchi2008coherent,wiseman1994all,james2008h,yanagisawa2003transfer,lloyd2000coherent,MJ12}, though a general theory of optimal design of coherent quantum feedback systems is at present unavailable. 
In coherent quantum feedback control, the controller is also a quantum system, and information flowing in the feedback loop is also quantum (e.g. via a quantum field). 
This type of feedback has recently led to new proposals for quantum memories, quantum error
correction, and ultra-low power classical photonic signal processing~\cite{kerckhoff2010designing,mabuchi2011nonlinear,mabuchi2012qubit,mabuchi2011coherent}.

 The purpose of this paper is to contribute to the knowledge of coherent quantum estimation and control by developing further a non-commutative formulation of the quantum filter given by Belavkin in 1980s~\cite{Belavkinone}. While the main results obtained by Belavkin apply only to the commutative measurement  case, the problem formulation he used was more general. Belavkin's theory of quantum filtering concerns the estimation of the variables of quantum systems conditioned on classical (commutative) measurement records. For linear quantum stochastic systems, Belavkin's filter has the same form as the classical Kalman filter. The Belavkin-Kalman filter is a classical system that processes the incoming measurements to produce the desired estimates. The estimates may be used for monitoring and/or feedback control of the quantum system.

 In our study, we formulate and solve a problem of optimal estimation of a linear quantum system variables  given the non-commutative outputs within the framework of non-commutative probability theory. In particular, we derive a system of non-commutative stochastic differential equations (the non-commutative Belavkin-Kalman filter) that minimizes a least squares error criterion. Such non-commutative filtering equations are well defined mathematically, even if they do not correspond to a physical system. However, if we wish to implement the non-commutative Belavkin-Kalman filter within the class of physically realizable linear quantum stochastic systems (such as linear quantum optical systems), then we find that the conditions for physical realizability impose strong restrictions.  In this paper,
 we find physical realizability conditions for general case and also for some particular cases. These strong physical realizability conditions are a key contribution of this paper.

 We remark  that our contribution here is different from the problem studied in~\cite{vladimirov2013coherent}. Since, in~\cite{vladimirov2013coherent}, the authors propose another physically realizable quantum system considered as a filter, connected to the output of the plant whose dynamics can be determined by minimizing the mean square discrepancy between the plant's state and the output of the filter. Also, they suppose an additional vacuum noise other than the plant's noises in the form of filter's dynamics. 
However, in this paper, we focus firstly on finding the form of linear least mean squares estimators for the non-commutative outputs by temporarily excluding the physical realizability constraints. To do this, we proceed as classical Kalman filtering and Belavkin-Kalman filtering by supposing that the mean squares estimator should satisfy a linear dynamics of innovation processes and we don't suppose any additional vacuum noises other than $dw$ which is the input process of the plant. As such, we obtain the form of least mean squares estimators for non commutative outputs. Then, we seek necessary and sufficient conditions which make such linear least mean squares estimators automatically physically realizable. As we can observe in examples, for some particular forms of plants, we are obliged to add additional vacuum noises to the least mean squares estimators to ensure physical realizability. These estimators which track asymptotically the plant's state and are physically realizable are called coherent observers \cite{Miao2012}. Roughly speaking, coherent least mean squares estimators and observers are another physical systems connected to the main system in cascade~\cite{vuglar2014,Miao2012,NJP09}. We remark that coherent linear least squares estimators and observers  could in principle be used   for coherent feedback control, although this matter is outside the scope of the present paper.

   \medskip
 
This paper is organized as follows. In Section~\ref{sec:three}, we  present general quantum linear stochastic dynamics. In Section~\ref{sec:four}, we obtain non-commutative linear least mean squares estimators for the general linear quantum stochastic dynamics, expressed in Theorem~\ref{thm:main}. In Section~\ref{sec:phys}, we study the physical realizability of such linear least mean squares estimators. The main results of this section are expressed in Theorem~\ref{thm:phys} and Corollaries~\ref{coro:phys1}-\ref{coro:case2}. Moreover, some illustrative examples are provided. Finally, the conclusion is given in Section~\ref{sec:six}.
\section{Quantum linear stochastic dynamics}~\label{sec:three}
Consider linear quantum possibly non-commutative stochastic systems of the form~\cite{james2008h}
\begin{eqnarray}~\label{eq:sdq}
&&dx\left(t\right)=Ax\left(t\right)dt+Bdw\left(t\right),\nonumber\\
&&dy\left(t\right)=Cx\left(t\right)dt+Ddw\left(t\right).
\end{eqnarray}
Here, $A,$ $B,$ $C$ and $D$ are real matrices in $\mathbb R^{n\times n},$ $\mathbb R^{n\times n_w},$ $\mathbb R^{n_y\times n},$ and $\mathbb R^{n_y\times n_w},$ where $n,n_w,n_y$ are positive integers with $n_w\geq n_y$. Also, 

\noindent $x(t)=[x_1(t),\ldots,x_n(t)]^T$ is a vector of self-adjoint possibly non-commutative system variables defined on a Hilbert space $\mathcal H$. The initial state $x(0)$ is Gaussian with state $\rho_0$ and satisfies  the commutation relations~\footnote{The notation $[A,B]$ corresponds to $AB-BA.$}
\begin{equation*}
[x_j(0),x_k(0)]=2i\Theta_{jk},\quad j,k=1,2,\ldots,n
\end{equation*}
where $\Theta$ is a real antisymmetric matrix with components $\Theta_{jk}$ and $i=\sqrt{-1}.$ We assume that the matrix $\Theta$ can take one of the two following forms:
\begin{itemize}
\item Canonical if $\Theta=\textrm{diag}_{\frac{n}{2}}(J),$ with $n$ even or
\item Degenerate canonical if  $\Theta=\textrm{diag}\left(0_{n'\times n'},\textrm{diag}_{\frac{n-n'}{2}}(J)\right),$ with 

$0< n'\leq n$ and $n-n'$ even.~\footnote{Here $0_{m\times n}$ corresponds to $m\times n$ zero matrix.}
\end{itemize}
Here $J$ corresponds to the real skew-symmetric $2\times 2$ matrix 
$$J=\left[\begin{array}{cc}0 & 1\\-1 & 0\end{array}\right],$$
and the "diag" notation corresponds to a block diagonal matrix. Also, $\textrm{diag}_m(J)$ denotes a $m\times m$ block diagonal
matrix with $m$ matrices $J$ on the diagonal.

The noise $dw(t)$ is a vector of self-adjoint quantum noises with It\={o} table
\begin{equation}~\label{eq:fwone}
dw\left(t\right)dw\left(t\right)^{T}=F_wdt,
\end{equation}
where $F_w$ is a non-negative Hermitian matrix~\footnote{The notation $X^T$ corresponds to the transpose of the matrix $X.$} (see e.g.,~\cite{parthasarathy1992introduction,belavkin1991continuous}).  
Indeed, the special case $F_w=I_{n_w\times n_w}$ describes a  classical noise vector $dw$. However, the more general case 
 $$
 F_w=I_{n_w\times n_w}+i\textrm{diag}\left(0_{n'\times n'},\textrm{diag}_{\frac{n_w-n'}{2}}(J)\right)
 $$
 presents $n'$ classical noises and $n_w-n'$ conjugate quantum noises.~\footnote{Here $I_{n\times n}$ is the $n\times n$ identity matrix.}  
 Here, the self-adjoint entries of the vector $w(t)$ which act on the 
Boson Fock space $\mathcal{F}$
are the quantum noises driving the system and they correspond to boson quadratures (see e.g.,~\cite{HP84,belavkin1992quantum,parthasarathy1992introduction}). This determines the following
commutation relations for the noise components
 \begin{equation}~\label{eq:comw}
 [dw(t),dw^T(t)]=2T_w\,dt,
 \end{equation}
 with $T_w=\frac{1}{2}(F_w-F_w^T).$~\footnote{If X and Y are column vectors of operators, the commutator is defined by $[X,Y^T]=XY^T-(YX^T)^T.$}

Similarly, the process $y$ has the following It\={o} table
\begin{equation*}
dy\left(t\right)dy\left(t\right)^{T}=F_ydt,
\end{equation*}
where $F_y$ is a non-negative Hermitian matrix.  
Indeed, the special case $F_y=I_{n_y\times n_y}$ describes a  classical output vector $dy$. However, the more general case 
 $$
 F_y=I_{n_y\times n_y}+i\textrm{diag}\left(0_{n'\times n'},\textrm{diag}_{\frac{n_y-n'}{2}}(J)\right)
 $$
 presents $n'$ classical outputs and $n_y-n'$ conjugate quantum outputs. The commutation relations for the processes $dy$ is determined by the matrix $T_y$ given by the following 
 \begin{equation*}
 [dy(t),dy^T(t)]=2T_y\,dt,
 \end{equation*}
 with $T_y=\frac{1}{2} (F_y-F_y^T).$ Note that we have the following relations
 \begin{equation*}
 F_y=DF_w D^T,\quad\textrm{and}\quad T_y=DT_w D^T.
 \end{equation*}
As discussed in~\cite{james2008h}, we can always set up the following conventions by appropriately enlarging $dw$, $dy$, $B$ and $C$: i)  $n_y$ is even ii) $F_w$ has the following form 
\begin{equation}~\label{eq:noisec}
F_w=I_{n_w\times n_w}+i\textrm{diag}_{\frac{n_w}{2}}(J),
\end{equation}
hence $n_w$ should be even.
\subsection*{Physical realizability of linear QSDEs}
Not all QSDEs of the form~\eqref{eq:sdq} represent the dynamics of physically meaningful open quantum systems. In the case that $\Theta$ is canonical, the system is physically realizable if it presents an open quantum harmonic oscillator. Now we give the formal definition of physical realizability (see e.g.,~\cite[Definition 3.3]{james2008h}).
\begin{Def}
\rm The system~\eqref{eq:sdq} is said to be physically realizable if one of the following holds:
\begin{itemize}
\item $\Theta$ is canonical and Equation~\eqref{eq:sdq} represents the dynamics of an open quantum harmonic oscillator.
\item $\Theta$ is degenerate canonical and there exists an augmentation of Equation~\eqref{eq:sdq} (see more details in~\cite{james2008h}) such that the new QSDEs represent the dynamics of an open quantum harmonic oscillator. 
\end{itemize}
The system~(\ref{eq:sdq}) describes an open quantum harmonic oscillator if there exists a quadratic Hamiltonian $H = \frac{1}{2} x(0)^T R x(0)$,
with a real, symmetric, $n \times n$ matrix $R,$ and a  coupling operator $L = \Lambda x(0)$, with a complex-valued $\frac{1}{2} n_w \times n$ coupling matrix $\Lambda$ 
	such that
	\begin{eqnarray}
	x_k(t) &=& U(t)^\dagger (x_k(0) \otimes 1) U(t),\quad k=1,\cdots,n \nonumber \\
	y_l(t) &=& U(t)^\dagger (1 \otimes w_l(t)) U(t), \quad l=1,\cdots,n_y \label{eqn:unitary}
\end{eqnarray}
where $\{U(t),\quad t\geq 0\}$ is an adapted process of unitary operators satisfying the following QSDE~\cite{HP84}
	\begin{equation*}
	dU(t)=\left(-iH\,dt-\frac{1}{2}L^\dag L\,dt+[-L^\dag\,\, L^T]\Gamma dw(t)\right)U(t),\quad U(0)=I.
	\end{equation*}
In this case, the matrices $A$, $B$, $C$ and $D$ are given by
	\begin{eqnarray*}
		A = 2 \Theta \left(R + \mathfrak{Im}\left(
			\Lambda^{\dagger}\Lambda \right) \right),
			\label{eqn:a2} \\
		B = 2i \Theta \left[ \begin{array}{cc}
			-\Lambda^{\dagger} & \Lambda^T \end{array} \right] \Gamma,
			\label{eqn:b2} \\
		C = \mathcal P^T \left[ \begin{array}{cc}
				\Sigma & 0 \\ 0 & \Sigma \end{array} \right]
			\left[ \begin{array}{cc} \Lambda + \Lambda^\# \\
				-i\Lambda + i\Lambda^\# \end{array} \right] ,
			\label{eqn:c2} \\
		D = \left[\begin{array}{cc}
				I_{n_y \times n_y} &
				0_{n_y \times (n_w - n_y)}
			\end{array} \right].
	\end{eqnarray*}
	Here, 
	$\Gamma$ is a 
	$n_w \times n_w$ matrix and 
	\begin{eqnarray*}
	\Gamma &=&  \mathcal P \textrm{diag} (M), \\ 
	M &=&  \frac{1}{2}
		\left[ \begin{array}{cc} 1 & i \\ 1 & -i \end{array} \right], \\
	\Sigma &=&  \left[ \begin{array}{cc}
				I_{\frac{1}{2}n_y \times 
				\frac{1}{2}n_y} &
				0_{\frac{1}{2}n_y \times \frac{1}{2}\left(
			n_w - n_y \right) } \end{array} \right].
	\end{eqnarray*}
\rm $\mathcal P$ is the appropriately dimensioned square permutation 
	matrix such that 
	$$ \mathcal P \left[ \begin{array}{cccc}a_1 & a_2 & \cdots & a_{2m} \end{array} \right] 
		= \left[ \begin{array}{cccccccc} a_1 & a_3 & \cdots & a_{2m-1} &
		a_2 & a_4 & \cdots & a_{2m} \end{array} \right].$$
	Also, note that $\mathfrak{Im}\left(.\right)$ 
	denotes the imaginary part of a matrix, ${X}^\dagger$ denotes the adjoint of an operator $X,$ and
	${X}^\#$ denotes the complex conjugate of a matrix $X.$
	\end{Def}

The following theorem borrowed from~\cite{james2008h} provides necessary and sufficient conditions for physical realizability of Equation~\eqref{eq:sdq} for any $\Theta$ (canonical or degenerate canonical). 
\begin{thm}[\cite{james2008h}]~\label{thm:prp}
\rm The system~\eqref{eq:sdq} is physically realizable if and only if 
\begin{eqnarray}~\label{eq:pr}
&&iA\Theta+i\Theta A^{T}+BT_wB^{T}=0,\nonumber\\
&&BD^T=\Theta C^T\textrm{diag}_{\frac{n_y}{2}}(J).
\end{eqnarray}
Here, $D=[I_{n_y\times n_y} \quad 0_{n_y\times (n_w-n_y)}].$
Moreover, for canonical $\Theta,$ the Hamiltonian and coupling matrices have explicit expressions as follows. The Hamiltonian matrix $R$ is uniquely given by $R=\frac{1}{4}(-\Theta A+A^T\Theta),$ and the coupling matrix $\Lambda$ is given uniquely by 
\begin{equation*}
\Lambda=-\frac{1}{2}i\,[0_{\frac{n_w}{2}\times \frac{n_w}{2}}\quad I_{\frac{n_w}{2}\times \frac{n_w}{2}}](\Gamma^{-1})^TB^T\Theta.
\end{equation*}
In the case that $\Theta$ is degenerate canonical, a physically realizable augmentation of the system can be constructed to determine the associated Hamiltonian and coupling operators using the above explicit formulas.
\end{thm}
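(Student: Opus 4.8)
The plan is to unwind the definition of physical realizability, which asks precisely for the existence of a symmetric Hamiltonian matrix $R$ and a coupling matrix $\Lambda$ such that the Heisenberg evolution \eqref{eqn:unitary} of the open quantum harmonic oscillator generated by $(R,\Lambda)$ reproduces the given quadruple $(A,B,C,D)$. Thus the entire statement reduces to two tasks: (i) compute what $(A,B,C,D)$ must look like when they come from some pair $(R,\Lambda)$, and (ii) characterize the image of the map $(R,\Lambda)\mapsto(A,B,C,D)$ and exhibit an explicit inverse on that image in the canonical case.

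For the necessity direction I would apply the quantum It\^o rules of Hudson--Parthasarathy to $x_k(t)=U(t)^\dagger (x_k(0)\otimes 1)U(t)$ and $y_l(t)=U(t)^\dagger (1\otimes w_l(t))U(t)$, using the QSDE for $U(t)$. Since $H=\frac12 x(0)^TRx(0)$ is quadratic and $L=\Lambda x(0)$ is linear in $x(0)$, the Lindblad drift $i[H,\cdot]+L^\dagger(\cdot)L-\frac12 L^\dagger L(\cdot)-\frac12(\cdot)L^\dagger L$ and the stochastic coefficients $[\cdot,L^\dagger]$ and $[L,\cdot]$ are all linear in $x(0)$ once one substitutes $[x_j(0),x_k(0)]=2i\Theta_{jk}$; collecting coefficients yields exactly the displayed expressions $A=2\Theta(R+\mathfrak{Im}(\Lambda^\dagger\Lambda))$, $B=2i\Theta[-\Lambda^\dagger\ \Lambda^T]\Gamma$, and the stated $C,D$. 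The two identities of \eqref{eq:pr} are then read off from preservation of commutators under conjugation by the unitary $U(t)$: differentiating the constant relation $[x(t),x(t)^T]\equiv 2i\Theta$ with the quantum It\^o rule gives $iA\Theta+i\Theta A^T+BT_wB^T=0$, with $[dx,dx^T]=2BT_wB^T\,dt$ supplying the It\^o correction; and the requirement that the \emph{same} unitary carries $x(0)$ and the input fields $w(t)$ into $x(t)$ and $y(t)$ respectively forces the compatibility $BD^T=\Theta C^T\textrm{diag}_{n_y/2}(J)$.

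For sufficiency with canonical $\Theta$ I would use that $\Theta$ is then invertible with $\Theta^2=-I$. Define $R:=\frac14(-\Theta A+A^T\Theta)$, which is symmetric because $\Theta^T=-\Theta$, and define $\Lambda$ by the stated formula; then verify by direct substitution into the formulas of the Definition, using \eqref{eq:pr}, that the oscillator built from $(R,\Lambda)$ reproduces $(A,B,C,D)$. Here the first identity is what makes the $\mathfrak{Im}(\Lambda^\dagger\Lambda)$ contribution to $A$ close up (after using $\Theta^2=-I$), and the second identity is what makes $C$ consistent with the chosen $\Lambda$ through $\Sigma$, $\mathcal P$ and $\Gamma$; uniqueness of $R$ and $\Lambda$ is immediate since these formulas are explicit left inverses of $(R,\Lambda)\mapsto(A,B,C,D)$. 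For degenerate canonical $\Theta$, I would append $n'$ auxiliary conjugate variables so that $\Theta$ becomes a canonical $\tilde\Theta$, extend $A,B,C,D$ to $\tilde A,\tilde B,\tilde C,\tilde D$ in the block form that keeps \eqref{eq:pr} valid for the augmented data, apply the canonical case to obtain a genuine open quantum harmonic oscillator, and observe that the original system is recovered as a subsystem---which is exactly what physical realizability means in the degenerate case.

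The main obstacle I anticipate is the real-ification bookkeeping underlying step (ii): the passage between the complex $\frac12 n_w\times n$ coupling matrix $\Lambda$ and the real matrices $B,C$ is mediated by the permutation $\mathcal P$, the blocks $M,\Sigma$, and $\Gamma=\mathcal P\,\textrm{diag}(M)$, so showing that \eqref{eq:pr} cuts out \emph{exactly}---not merely contains---the image of $(R,\Lambda)\mapsto(A,B,C,D)$, equivalently that the explicit formulas for $R$ and $\Lambda$ genuinely invert that map on its image, requires tracking these linear isomorphisms carefully and exploiting the precise form of $F_w$ in \eqref{eq:noisec}. The construction of the augmentation in the degenerate case, and the check that it always preserves both identities, is the second delicate point.
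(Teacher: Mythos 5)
The paper gives no proof of this theorem: it is quoted verbatim from \cite{james2008h} (Theorem 3.4 there), so there is no internal argument to compare yours against. Your sketch reproduces the standard proof of that reference in outline --- quantum It\^{o} expansion of the Heisenberg dynamics to read off $(A,B,C,D)$ from $(R,\Lambda)$, preservation of $[x(t),x(t)^T]=2i\Theta$ for the first identity (your computation $[dx,dx^T]=2BT_wB^T\,dt$ is the right It\^{o} correction), commutation of $x(t)$ with the output fields for the second (note this yields $\Theta C^T+BT_wD^T=0$, which becomes the stated form only after using $T_wD^T=D^T\mathrm{diag}_{n_y/2}(J)$ and $(\mathrm{diag}_{n_y/2}(J))^2=-I$, a conversion worth making explicit), explicit inversion via the stated $R$ and $\Lambda$ in the canonical case, and augmentation in the degenerate case --- and is sound.
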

In the following lemma, we prove that the non-demolition property holds for non-commutative outputs if system~\eqref{eq:sdq} is physically realizable.
\begin{lemma}
\rm If the system~\eqref{eq:sdq} is physically realizable, then  the non-demolition property holds, i.e., 
$$[x(t),y(s)^T]=0,\quad\textrm{for any}\quad t\geq s.$$ 
\end{lemma}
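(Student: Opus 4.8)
\emph{Proof strategy.} The plan is to derive the non-demolition property from the physical realizability identities~\eqref{eq:pr} by means of quantum It\={o} calculus. Besides the It\={o} rules, the only structural fact I would invoke is the non-anticipating character of the forward noise increments: since the solution of~\eqref{eq:sdq} expresses $x(t)$ and $y(t)$ as operator integrals over $[0,t]$ of $x(0)$ and $dw$, these are adapted and hence commute with the forward increments $dw(\tau)$ for $\tau\ge t$; in particular $[x(t),dw(t)^T]=0$, $[y(t),dw(t)^T]=0$ and $[dw(t),y(s)^T]=0$ whenever $t\ge s$. I would also first record that physical realizability preserves the commutation relations of $x$: applying the It\={o} product rule to $x(t)x(t)^T$ (with the vector-commutator convention of the footnote) and discarding $[x(t),dw(t)^T]=0$ yields $d_t[x(t),x(t)^T]=A[x(t),x(t)^T]\,dt+[x(t),x(t)^T]A^T\,dt+2BT_wB^T\,dt$; since $iA\Theta+i\Theta A^T+BT_wB^T=0$ by the first line of~\eqref{eq:pr}, the constant $2i\Theta$ solves this linear equation with the correct initial value, so $[x(t),x(t)^T]=2i\Theta$ for all $t$. (One could alternatively run the entire argument in the Heisenberg picture, using the unitary dilation $x(t)=U(t)^\dagger x(0)U(t)$, $y_l(t)=U(t)^\dagger w_l(t)U(t)$ supplied by physical realizability and the adaptedness of $U(t)U(s)^{-1}$ to the noise on $(s,t]$; I favour the It\={o} route because Theorem~\ref{thm:prp} already packages exactly the identities~\eqref{eq:pr} that will be used, uniformly for canonical and degenerate canonical $\Theta$.)

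\emph{Equal-time identity.} Next I would apply the It\={o} product rule to $x(t)y(t)^T$, obtaining $d_t[x(t),y(t)^T]=[dx(t),y(t)^T]+[x(t),dy(t)^T]+[dx(t),dy(t)^T]$. The first bracket, via $dx=Ax\,dt+B\,dw$ and $[dw(t),y(t)^T]=0$, equals $A[x(t),y(t)^T]\,dt$; the second, via $dy=Cx\,dt+D\,dw$, $[x(t),dw(t)^T]=0$ and the preserved commutation relations, equals $[x(t),x(t)^T]C^T\,dt=2i\Theta C^T\,dt$; and the It\={o} cross term, by~\eqref{eq:fwone}, is $[dx(t),dy(t)^T]=dx\,dy^T-(dy\,dx^T)^T=2BT_wD^T\,dt$. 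Collecting, $d_t[x(t),y(t)^T]=A[x(t),y(t)^T]\,dt+2(i\Theta C^T+BT_wD^T)\,dt$; since $[x(0),y(0)^T]=0$, the equal-time claim $[x(t),y(t)^T]=0$ will follow once the source term is shown to vanish, i.e.\ $BT_wD^T=-i\Theta C^T$. To check this I would use the conventions $F_w=I_{n_w\times n_w}+i\,\textrm{diag}_{\frac{n_w}{2}}(J)$ and $D=[\,I_{n_y\times n_y}\ \ 0_{n_y\times(n_w-n_y)}\,]$: then $T_w=i\,\textrm{diag}_{\frac{n_w}{2}}(J)$, so $T_wD^T$ is $i$ times the first $n_y$ columns of $\textrm{diag}_{\frac{n_w}{2}}(J)$ and $BT_wD^T=i\,(BD^T)\,\textrm{diag}_{\frac{n_y}{2}}(J)$; substituting the second line of~\eqref{eq:pr}, $BD^T=\Theta C^T\,\textrm{diag}_{\frac{n_y}{2}}(J)$, and using $\textrm{diag}_{\frac{n_y}{2}}(J)^2=-I_{n_y}$ gives $BT_wD^T=i\,\Theta C^T\,\textrm{diag}_{\frac{n_y}{2}}(J)^2=-i\,\Theta C^T$, as required.

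\emph{Propagation and the main difficulty.} Finally, to pass from equal times to $t\ge s$, I would fix $s$, treat $y(s)$ as constant, and apply the It\={o} product rule to $x(t)y(s)^T$: this gives $d_t[x(t),y(s)^T]=[dx(t),y(s)^T]=A[x(t),y(s)^T]\,dt$, the martingale part $B[dw(t),y(s)^T]$ vanishing by non-anticipation, whence $[x(t),y(s)^T]=e^{A(t-s)}[x(s),y(s)^T]=0$ by the equal-time identity, which is the lemma. The one genuinely non-routine point is the cancellation $i\Theta C^T+BT_wD^T=0$: it is elementary linear algebra, but the hard part is bookkeeping --- one must respect the block/permutation conventions built into~\eqref{eq:pr} and into the truncation $D=[I_{n_y\times n_y}\ 0]$, and make sure the It\={o} cross term and the commutator term cancel rather than reinforce. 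No separate treatment of degenerate canonical $\Theta$ is needed, since~\eqref{eq:pr} holds in that case by Theorem~\ref{thm:prp}; equivalently, one could apply the above to a physically realizable augmentation and restrict, the original $x$ and $y$ being sub-vectors of the augmented variables.
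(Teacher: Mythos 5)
Your proposal is correct and its essential content coincides with the paper's proof: both arguments reduce the lemma to the single algebraic identity $i\Theta C^T+BT_wD^T=0$ and then verify it from the second physical-realizability condition $BD^T=\Theta C^T\,\textrm{diag}_{\frac{n_y}{2}}(J)$ together with $\left(\textrm{diag}_{\frac{n_y}{2}}(J)\right)^2=-I_{n_y\times n_y}$ and the block structure of $D$ and $T_w$. The only difference is that you derive the reduction to this identity from scratch by quantum It\={o} calculus (equal-time commutator plus forward propagation), whereas the paper imports that reduction as Lemma~4 of \cite{wang2014network}; your version is simply more self-contained (and, incidentally, keeps the factor of $i$ in the identity consistent with $T_w=i\,\textrm{diag}_{\frac{n_w}{2}}(J)$).
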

\begin{proof}
\rm In~\cite[Lemma 4]{wang2014network}, it was shown that the condition $[x(t),y(s)^T]=0,$ for any $t\geq s$ is equivalent  to the following
\begin{equation}~\label{eq:ndem}
\Theta C^T+BT_wD^T=0. 
\end{equation}
Now we show that if the plant is supposed physically realizable, i.e., if condition~\eqref{eq:pr} holds, then, the above equality holds too. Since, by condition~\eqref{eq:pr}, we have 
\begin{equation*}
BD^T=\Theta C^T\textrm{diag}_{\frac{n_y}{2}}(J),
\end{equation*}
which is equivalent to $\Theta C^T=-BD^T\textrm{diag}_{\frac{n_y}{2}}(J),$ as $(\textrm{diag}_{\frac{n_y}{2}}(J))^2=-I_{n_y\times n_y}.$ Now it is easy to verify that condition~\eqref{eq:ndem} is satisfied, because we have 
$$D^T\textrm{diag}_{\frac{n_y}{2}}(J)=T_w D^T,$$ therefore
$$BD^T\textrm{diag}_{\frac{n_y}{2}}(J)=BT_w D^T,$$ which is exactly condition~\eqref{eq:ndem}.
\end{proof}
In the following lemma, we show that the non-commutative outputs don't have self-non-demolition property.
\begin{lemma}
\rm The non-commutative output processes $y$ have the following commutation relations
\begin{equation*}
[y(t),y(s)^T]=2DT_wD^Ts,\quad\textrm{for all}\quad t\geq s.
\end{equation*} 
\end{lemma}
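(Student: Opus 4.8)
The plan is to reduce the two-time commutator to the equal-time one and then compute the latter by quantum It\^o calculus, using throughout the non-demolition identity $[x(r),y(s)^T]=0$ for $r\geq s$ from the previous lemma (so the standing hypothesis is that the plant~\eqref{eq:sdq} is physically realizable). First, for $t\geq s$, I would write $y(t)=y(s)+\int_s^t dy(r)$, so that with the convention $[X,Y^T]=XY^T-(YX^T)^T$ one gets
$$[y(t),y(s)^T]=[y(s),y(s)^T]+\Big[\int_s^t dy(r),\,y(s)^T\Big],$$
the last bracket having $(j,k)$-entry $[\,(\int_s^t dy(r))_j,\,y_k(s)\,]$. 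Substituting $dy(r)=Cx(r)\,dr+D\,dw(r)$, its drift part is $\int_s^t C\,[x(r),y(s)^T]\,dr=0$ by non-demolition (as $r\geq s$ on $[s,t]$), and its noise part involves $[\,w(t)-w(s),\,y(s)^T\,]=0$ because the increment $w(t)-w(s)$ lives on the Fock factor over $(s,t]$ and commutes with $y(s)$, which is adapted to times $\leq s$. Hence $[y(t),y(s)^T]=[y(s),y(s)^T]$.

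For the equal-time commutator I would apply the quantum It\^o product rule to $y_j(r)y_k(r)$ and to $y_k(r)y_j(r)$ and subtract:
$$d[y_j,y_k]=\big((dy_j)y_k-y_k(dy_j)\big)+\big(y_j(dy_k)-(dy_k)y_j\big)+\big(dy_j\,dy_k-dy_k\,dy_j\big).$$
In the two first-order brackets the $D\,dw(r)$ contribution commutes with the pre-increment values $y_j(r),y_k(r)$ (future increment again) and the $Cx(r)\,dr$ contribution is proportional to $[x(r),y(r)^T]=0$ (non-demolition), so those terms drop out; the surviving second-order term is precisely the It\^o correction $[dy(r),dy(r)^T]=2T_y\,dr$. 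Thus $d\,[y(r),y(r)^T]=2T_y\,dr$, and integrating from $0$ to $s$ with $y(0)=0$ gives $[y(s),y(s)^T]=2T_y\,s=2DT_wD^T\,s$ since $T_y=DT_wD^T$. Combined with the reduction above this proves the claim.

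I expect the It\^o step to be the main obstacle: one must keep careful track of the convention for commutators of operator-valued vectors and of which factors in the product rule are evaluated before versus after the increment, so that the first-order terms genuinely cancel and only the second-order term contributes; it is exactly $[x(r),y(r)^T]=0$ that forces this cancellation. Physical realizability is essential here -- for a non-realizable plant the identity fails, e.g.\ with $n=2$, canonical $\Theta$ and $A=B=0$, $C=D=I$ one computes $[y(t),y(s)^T]=2T_w\,s\,(t+1)$. A more computational alternative would be to solve~\eqref{eq:sdq}, write $y(t)=G(t)x(0)+\int_0^t\big(G(t-u)B+D\big)dw(u)$ with $G(v)=\int_0^v Ce^{Ar}\,dr$, expand $[y(t),y(s)^T]$ into the $[x(0),x(0)^T]=2i\Theta$ contribution plus a double It\^o integral against $T_w$, and collapse it using the two identities of Theorem~\ref{thm:prp}; this is heavier but gives the same conclusion.
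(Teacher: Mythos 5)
Your proof is correct and follows essentially the same route as the paper's: reduce $[y(t),y(s)^T]$ to the equal-time commutator via $y(t)=y(s)+\int_s^t dy(r)$ and non-demolition, then apply the quantum It\^o rule so that only the second-order term $[dy,dy^T]=2DT_wD^T\,dr$ survives, and integrate from $y(0)=0$. Your side remark that physical realizability of the plant is a standing hypothesis (needed for the non-demolition lemma) is accurate, even though the lemma as stated in the paper leaves it implicit.
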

\begin{proof}
First note that we have the following property
\begin{equation*}
[y(t),y(s)^T]=[y(s)+\int_s^t dy(s'),y(s)^T]=[y(s),y(s)^T],
\end{equation*}
since
\begin{equation*}
[\int_s^t dy(s'),y(s)^T]=[\int_s^t C x(s')+Ddw(s'),y(s)^T]=0,
\end{equation*}
as $[x(s'),y(s)^T]=0$ for any $s'\geq s,$ by previous lemma and $[dw(s'),y(s)^T]=0,$ for any $s'\geq s.$
Now it is sufficient to prove the lemma for $[y(s),y(s)^T].$ By taking the differentiation of this commutator, by It\={o} rule, we find 
\begin{align*}
d[y(s),y(s)^T]&=[dy(s),y(s)^T]+[y(s),dy(s)^T]+[dy(s),dy(s)^T]\\
&=[Cx(s)ds+Ddw(s),y(s)^T]+[y(s),x(s)^TC^Tds+dw(s)^TD^T]\\
&+[Cx(s)ds+Ddw(s),x(s)^TC^Tds+dw(s)^TD^T]\\
&=D[dw(s),dw(s)^T]D^T=2DT_w D^Tds,
\end{align*}
where we have used the following facts: $[x(s),y(s)^T]=0,$ $[y(s),x(s)^T]=0,$ $[dw(s),y(s)^T]=0,$ $[y(s),dw(s)^T]=0,$ $(ds)^2=0,$ $dw(s)ds=0,$ and $ds dw(s)^T=0.$ Also, for the last equality, we have used the commutation relations for the processes $dw$ given in~\eqref{eq:comw}.

Finally, we get the following
$$[y(s),y(s)^T]=2DT_wD^Ts,$$
since $[y(0),y(0)^T]=0_{n_y\times n_y}.$
\end{proof}
\begin{rem}
\rm We recall that when $y$ is commutative, we have $$T_y=DT_wD^T=0_{n_y\times n_y},$$ which implies that the process $y$ is self-commuting.
\end{rem}
Before starting the next section, let us present the following definition. 
\begin{Def}
\rm For any vector of zero mean self-adjoint operators $\zeta,$ the symmetric covariance is defined by
\begin{eqnarray}~\label{eq:symcov}
C_\zeta = \frac{1}{2} \mathbb{E}[ \zeta \zeta^T  + (  \zeta \zeta^T   )^T ] .
\end{eqnarray}
The matrix $C_\zeta$ is non-negative,  real and symmetric.  If $\zeta$ does not have zero mean, the covariance is defined by subtracting the mean.
\end{Def}
\section{Linear least mean squares estimation}
\label{sec:four}
In this section, we formulate a linear least squares estimation problem for the non-commutative linear system (\ref{eq:sdq}), with non-commutative output process  $y(t)$.   The problem concerns finding an operator $\hat x(t)$, called an estimator,  such that the dynamical evolution of $\hat x(t)$ depends causally on the output process $y(t)$ and the length of the error
\begin{eqnarray}~\label{eq:error}
e(t) = x(t) - \hat x(t)
\end{eqnarray}
is minimized.  The idea is that $\hat x(t)$ \lq\lq{tracks}\rq\rq  \ the plant operator $x(t)$.
The vector $\hat x(t)$ has self-adjoint operator components defined on a generally  larger space than the system (\ref{eq:sdq}). More precisely, the vector $\hat x$ consists of entries which are self-adjoint operators acting on the tensor product Hilbert Space $\mathcal H\bigotimes \mathcal F\bigotimes \mathcal H_1,$ where $\mathcal H_1$ is the initial Hilbert space of the least squares estimator $\hat x,$ which is a copy of the system space and independent of the system (\ref{eq:sdq}).

Take $\mathcal Y(t)$ for the von Neumann algebra generated by the output process $y(s)$ for $0\leq s \leq t.$  When $\mathcal Y(t)$ is commutative, i.e., $y(t)$ is a classical measurement process (by the Spectral Theorem~\cite[Theorem 3.3]{BHJ07}), the optimal filter in the least
squares sense is obtained by computing the conditional expectation onto $\mathcal Y(t)$~\cite{belavkin1992quantum,BHJ07}. The non-demolition property $([x(t), y(s)^T] = 0,$ for any $t \geq s$) is sufficient to conclude the existence of the commutative conditional expectation~\cite{belavkin1991continuous}.

In contrast to commutative
output, it is not shown whether the least mean squares estimator that we define in Definition~\ref{def:lms}, is equivalent to conditional expectation. This problem
is related to the existence of a non-commutative conditional expectation which is not always guaranteed, and we do not consider this matter in this paper  (see more details in~\cite{takesaki1972conditional}).

Firstly, we define the class $\xi$ of linear estimators of the form,
\begin{equation}~\label{eq:qkalman}
d\check x(t)=A\check{x}(t)dt+K (t)( dy(t) - C\check{x}(t)dt), \ \ \check{x}(0)=\hat x_0,
\end{equation}
where $y$ is the adapted process defined in Equation~\eqref{eqn:unitary} (see~\cite{BHJ07,parthasarathy1992introduction,HP84} for a discussion of adapted quantum processes). 
Equation (\ref{eq:qkalman}) has the standard form of an observer or Kalman filter.  The real gain matrix $K(t)$ is to be determined.

The initial condition  $\hat x(0)$ satisfies the commutation relations
$$
\hat x(0) \hat x(0)^T-(\hat x(0) \hat x(0)^T)^T=2i\Theta .
$$ 
The state of $\hat x(0)$ is taken to be $\hat \rho_0$.    Consequently, the initial state of the composite system  (\ref{eq:sdq}) and  (\ref{eq:qkalman}) is the Gaussian state $\rho= \rho_0 \otimes \hat \rho_0$. 
\begin{Def}~\label{def:lms}
\rm A linear least mean squares estimator $\hat x $ for the non-commutative linear system (\ref{eq:sdq}) has the following properties,
\begin{itemize}
 \item it is defined on the class $\xi,$ i.e., it is a linear system of the form (\ref{eq:qkalman}), and
 \item  the real matrix $K(t)$ is chosen to minimize the symmetrized mean squares error defined as follows
\begin{equation}
J(K(t)):=\mathrm{Tr}[P(t)],
\label{eq:mmse}
\end{equation}
where $P(t)$ is the symmetric error covariance matrix defined by 
\begin{equation*}~\label{eq:ricsym}
P(t):=C_{e(t)}=\frac{1}{2} \mathbb{E}_\rho[ e(t) e(t)^T+(e(t)e(t)^T)^T].
\end{equation*}
\end{itemize}
\end{Def}
Prior to state our main theorem, we need the following equations 
\begin{align}~\label{eq:P-ric}
\dot{P}\left(t\right)&=\left(A-K(t)C\right)P(t)+P(t)\left(A-K(t)C\right)^{T}
+(B-K(t)D)(B-K(t)D)^T
\nonumber\\
P(0)&=C_{e(0)}.
\end{align}
Here, $C_{e(0)}$ is the initial symmetric error covariance. 
\begin{thm}~\label{thm:main}
\rm Suppose that the plant~\eqref{eq:sdq} is physically realizable. Then, the linear system (\ref{eq:qkalman})  is a linear least mean squares estimator for the system (\ref{eq:sdq})  if and only if the gain $K(t)$ is given by
\begin{eqnarray}
K(t)=BD^T+P(t)C^T,
\label{eq:K-optimal}
\end{eqnarray}
where $P(t)$ is the symmetric positive definite solution to the Riccati equation (\ref{eq:P-ric}).
Furthermore, the innovations process
\begin{equation}~\label{eq:inn}
dr(t):=dy(t)-C\hat{x}(t)dt=Ce(t)dt+Ddw(t), \ \ r(0)=0,
\end{equation}
is a quantum Wiener process with symmetrized covariance $t$ and    It\={o} table
\begin{eqnarray}
dr(t) dr^T(t) = D F_w D^T dt .
\end{eqnarray}
\end{thm}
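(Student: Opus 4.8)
\medskip
\noindent\textbf{Proof proposal.}

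The plan is to reduce the estimation problem to a matrix Riccati optimization, exactly as in the classical Kalman and Belavkin--Kalman theory, and then to analyze the innovations by carefully tracking the non-commutative corrections in the quantum It\^o calculus; throughout, physical realizability of the plant is what makes $y$ a genuine adapted output process (via the dilation \eqref{eqn:unitary}) and guarantees the non-demolition property of the previous Lemma. First I would subtract \eqref{eq:qkalman} from \eqref{eq:sdq}: using $dy-C\hat x\,dt=Ce\,dt+Ddw$ this gives the closed linear QSDE $de(t)=(A-K(t)C)e(t)\,dt+(B-K(t)D)\,dw(t)$, driven by $dw$ with $e(0)=x(0)-\hat x(0)$. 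Applying the quantum It\^o rule to $e(t)e(t)^T$ in the state $\rho=\rho_0\otimes\hat\rho_0$, symmetrizing, and using $F_w+F_w^T=2I_{n_w\times n_w}$ from the conventions \eqref{eq:noisec} together with $\mathbb{E}[dw(t)\,e(t)^T]=0$ for adapted $e$, one recovers precisely \eqref{eq:P-ric} for $P(t)=C_{e(t)}$.

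Next, since $D=[\,I_{n_y\times n_y}\ \ 0\,]$ we have $DD^T=I_{n_y\times n_y}$, so the right-hand side of \eqref{eq:P-ric}, regarded as a function of $K$, can be written as
\begin{equation*}
AP+PA^T+BB^T-(PC^T+BD^T)(CP+DB^T)+\big(K-(PC^T+BD^T)\big)\big(K-(PC^T+BD^T)\big)^T ,
\end{equation*}
using $P=P^T$ and $CP+DB^T=(PC^T+BD^T)^T$; only the final term depends on $K$, is positive semidefinite, and vanishes exactly at $K^\star(t)=BD^T+P(t)C^T$. This both identifies \eqref{eq:K-optimal} and turns \eqref{eq:P-ric} into the closed (quadratic) Riccati ODE, whose nonnegative solution $P^\star$ is global — it is dominated by the solution of the linear comparison equation $\dot{\tilde P}=A\tilde P+\tilde P A^T+BB^T$ — and positive definite because the initial covariance $C_{e(0)}$ is, by the uncertainty relations for the Gaussian states $\rho_0$, $\hat\rho_0$ and independence of the filter's space. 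To get the equivalence I would then run a comparison argument: for an arbitrary admissible $K$ with covariance $P$, set $\Delta=P-P^\star$; subtracting the two instances of \eqref{eq:P-ric} and plugging in $K^\star=P^\star C^T+BD^T$, the cross terms collapse, leaving $\dot\Delta=(A-KC)\Delta+\Delta(A-KC)^T+(K-K^\star)(K-K^\star)^T$ with $\Delta(0)=0$. Variation of constants gives $\Delta(t)=\int_0^t\Phi(t,s)\,(K-K^\star)(s)(K-K^\star)(s)^T\,\Phi(t,s)^T\,ds\ge 0$, where $\Phi$ is the transition matrix of $A-KC$; hence $\mathrm{Tr}[P(t)]\ge\mathrm{Tr}[P^\star(t)]$ for all $t$, with equality for all $t$ iff $K\equiv K^\star$. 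This proves both implications of the ``if and only if''.

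For the innovations, $dr(t)=dy(t)-C\hat x(t)\,dt=Ce(t)\,dt+Ddw(t)$ with $r(0)=0$, and the It\^o table $dr\,dr^T=Ddw\,dw^TD^T+O\big((dt)^2,dt\,dw\big)=DF_wD^T\,dt$ is immediate. Since $\mathbb{E}[e(t)]$ solves $\tfrac{d}{dt}\mathbb{E}[e]=(A-KC)\mathbb{E}[e]$ with $\mathbb{E}[e(0)]=0$, the means of $e$ and of $r$ vanish identically. For the covariance I would introduce the real symmetrized cross-covariance $\tilde G(t)=\tfrac12\mathbb{E}[e(t)r(t)^T+(r(t)e(t)^T)^T]$ and expand $d(er^T)$ and $d[e,r^T]$ by the quantum It\^o rule; using $[dw(t),r(t)^T]=0$, $[e(t),dw(t)^T]=0$, $F_w-T_w=\tfrac12(F_w+F_w^T)=I$, $DD^T=I$, and the identity $\mathbb{E}[ee^T]-\tfrac12[e,e^T]=P(t)$ (valid since the relevant commutators are c-numbers), this yields the closed equation $\dot{\tilde G}=(A-KC)\tilde G+\big(BD^T+P C^T-K\big)$ with $\tilde G(0)=0$. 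For the optimal gain the inhomogeneity cancels, so $\tilde G(t)\equiv 0$ (the non-commutative ``orthogonality principle''); substituting this into the symmetrized It\^o expansion of $r(t)r(t)^T$ makes every cross term vanish, leaving $\tfrac{d}{dt}C_{r(t)}=\tfrac12\,D(F_w+F_w^T)D^T=I_{n_y\times n_y}$, so $C_{r(t)}=t\,I_{n_y\times n_y}$. Finally $r$ is a linear functional of the jointly Gaussian family $\{e(0),\,w(\cdot)\}$, hence Gaussian, and $\tilde G\equiv 0$ together with $[dw(t),r(t)^T]=0$ gives the independence of the increments of $r$ from its own past; with the It\^o table and the constant-rate covariance this identifies $r$ as a quantum Wiener process with symmetrized covariance $t$.

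The routine parts are the completing-the-square and comparison steps, which mirror the classical Kalman argument. The hard part will be the innovations analysis: because the filter \eqref{eq:qkalman} is \emph{not} assumed physically realizable, the commutator matrices $[e,e^T]$, $[e,r^T]$, $[\hat x,\hat x^T]$ that appear throughout the It\^o expansions are genuinely time-dependent and need \emph{not} equal $2i\Theta$, so they cannot be discarded a priori; the calculation must be organized so that, for the optimal gain, they cancel exactly — via $\mathbb{E}[ee^T]-\tfrac12[e,e^T]=P$ and $F_w-T_w=I$ — leaving the innovations with the same symmetrized statistics $t\,I_{n_y\times n_y}$ as in the commutative theory.
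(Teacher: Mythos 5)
Your proposal is correct and follows essentially the same route as the paper: derive the error QSDE, obtain the symmetrized Riccati equation \eqref{eq:P-ric} via the quantum It\^o rule using $\tfrac12(F_w+F_w^T)=I_{n_w\times n_w}$ and $DD^T=I_{n_y\times n_y}$, identify the optimal gain, and establish the innovations statistics by showing the $(e,r)$ cross-covariance vanishes so that $C_{r(t)}=t\,I_{n_y\times n_y}$ and $dr\,dr^T=DF_wD^T\,dt$. The only real difference is that where the paper simply invokes \cite[Lemma 3.1]{kwakernaak1972linear} for the optimality of $K^\star=BD^T+PC^T$, you prove it directly by completing the square and a comparison argument for $\Delta=P-P^\star$ (whose cross terms indeed collapse), which is a correct, self-contained unpacking of that cited lemma.
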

\begin{proof}
The proof is a modification of the well known methods documented in \cite{kwakernaak1972linear}. It follows from \cite[Lemma 3.1]{kwakernaak1972linear}  that $J(K)$ is minimized, if and only if 
$$K(t)=(BD^T+P(t)C^T)(DD^T)^{-1},$$ where $P(t)$ is the solution to (\ref{eq:P-ric}). Below, we show that the symmetrized error covariance matrix satisfies the Riccati equation~\eqref{eq:P-ric}.

The error as defined in Equation~\eqref{eq:error} satisfies  the following dynamics
\begin{equation}~\label{eq:sde}
de(t)=\Big(A-K(t)C\Big)e(t)\,dt+\Big(B-K(t)D\Big)\,dw(t).
\end{equation}
Fix any real matrix $K(t)$ and let $\hat x(t)$ be the solution of Equation (\ref{eq:qkalman}). Let $e(t)=x(t)-\hat x(t)$ be the associated error, and 
consider the real symmetric error covariance 
\begin{eqnarray}
P(t) = C_{e(t)}=\frac{1}{2} \mathbb{E}_\rho[ e(t) e(t)^T+(e(t)e(t)^T)^T] .
\end{eqnarray}
Take the derivation of the above equation, by the quantum It\={o} rule, we have
\begin{align*}
dP(t)&=\frac{1}{2}\Big(\mathbb{E}_\rho\Big[ de(t) e(t)^T+e(t)de(t)^T+de(t)de(t)^T\nonumber\\
&+\left(de(t) e(t)^T+e(t)de(t)^T+de(t)de(t)^T\right)^T\Big] \Big)
\end{align*}
Now, it is sufficient to replace the expression of $de$ given in Equation~\eqref{eq:sde} in above, we get
\begin{align}~\label{eq:ricito}
dP(t)&=\frac{1}{2}\Big((A-KC)\mathbb{E}_\rho[e(t)e(t)^T]\,dt+\mathbb{E}_\rho[e(t)e(t)^T](A-KC)^T\,dt\nonumber\\
&+(B-KD)F_w(B-KD)^T\,dt
+\mathbb{E}_\rho[(e(t)e(t)^T)^T](A-KC)^T\,dt\nonumber\\
&+(A-KC)\mathbb{E}_\rho[(e(t)e(t)^T)^T]\,dt+(B-KD)F_w^T(B-KD)^T\,dt\Big),
\end{align}
where we have used the followings: $\mathbb E_\rho[dw(t)e(t)^T]=0,$ $\mathbb E_\rho[e(t)dw(t)^T]=0,$ $\mathbb E_\rho[(dt)^2]=0,$ $\mathbb E_\rho[dtdw^T]=0,$ $\mathbb E_\rho[dwdt]=0,$ and Equation~\eqref{eq:fwone}.

From Equation~\eqref{eq:ricito}, we can write the following
\begin{align}~\label{eq:P-new}
\dot{P}\left(t\right)&=\left(A-K(t)C\right)P(t)+P(t)\left(A-K(t)C\right)^{T}\nonumber\\
&+(B-K(t)D)    (B-K(t)D)^T\nonumber\\
P(0)&=C_{e(0)}
\end{align}
Since by Equation~\eqref{eq:noisec}, we know $\frac{F_w+F_w^T}{2}=I_{n_w\times n_w}.$

Now the mean squares error can be expressed in terms of $P(t)$ as follows
\begin{eqnarray}
J(K(t)) = \mathrm{Tr}[ P(t) ].
\end{eqnarray}
As Equation~\eqref{eq:P-new} has the same form as the standard Riccati equation considered in~\cite{kwakernaak1972linear}, we can apply~\cite[Lemma 3.1]{kwakernaak1972linear} to find the the minimum of $J(K),$ we get
\begin{equation}~\label{eq:kgain}
K(t)=(BD^T+P(t)C^T)(DD^T)^{-1}.
\end{equation} 
The gain given in above can be further simplified as 
$$K(t)=BD^T+P(t)C^T,$$ since $DD^T=I_{n_y\times n_y}$ by physical realizability of the plant. This finishes the proof of the first part of Theorem~\ref{thm:main}.

\medskip

Next, following \cite[Sec. 4.3.6]{kwakernaak1972linear}, let
\begin{eqnarray}
\Gamma(t) = \left[
\begin{array}{cc}
\Gamma_{11}(t)  & \Gamma_{12}(t)
\\
\Gamma_{12}^T(t) & \Gamma_{22}(t)
\end{array}
\right]
\end{eqnarray}
be the (symmetrized) covariance matrix for the vector
$
 \left[
 \begin{array}{c}
r(t)
\\
e(t)
\end{array}
\right].
$
By the It\={o} rule and taking expectations, we find that
{\small\begin{eqnarray*}
\dot \Gamma_{11}(t)   &= & C\Gamma_{12}^T(t) + \Gamma_{12}(t) C^T + I , \ \ \Gamma_{11}(0)=0_{n_y\times n_y},
\\
 \dot \Gamma_{12}(t)   &= &   C \Gamma_{22}(t) + \Gamma_{12}(t) (A-K(t)C)^T + D(B - K(t)D)^T, \ \ \Gamma_{12}(0)=0_{n_y\times n},
 \\
 \dot \Gamma_{22}(t)   &= &  (A-K(t)C) \Gamma_{22}(t)  + \Gamma_{22}(t) (A-K(t)C)^T
 \nonumber \\
 &&  + (B-K(t)D) (B-K(t)D)^T, \ \ \Gamma_{22}(0)=C_{e(0)} .
\end{eqnarray*}}
Comparing with Equation (\ref{eq:P-ric}), we find that $\Gamma_{22}(t)=P(t)$, and from (\ref{eq:K-optimal}), we have
\begin{eqnarray}
 \dot \Gamma_{12}(t)   =   \Gamma_{12}(t) (A-KC)^T, \ \ \Gamma_{12}(0)=0_{n_y\times n},
\end{eqnarray}
which implies $\Gamma_{12}(t)=0_{n_y\times n}$ for all $t \geq 0$. From this, we see that
\begin{eqnarray}
\dot \Gamma_{11}(t) =I_{n_y\times n_y},\ \ \Gamma_{11}(0)=0_{n_y\times n_y},
\end{eqnarray}
and hence $\Gamma_{11}(t) =tI_{n_y\times n_y}.$ Also, it is obvious that we have the following relations for the innovation processes $dr,$
$$dr(t) dr^T(t) = D F_w D^T dt,$$ since $(dt)^2=0,$ $dw(t)dt=0,$ and $dtdw(t)^T=0.$
\end{proof}
\section{Results on physical realizability}~\label{sec:phys}
In this section, we will study the physical realizability of the least mean squares estimators announced in Theorem~\ref{thm:main}. 
In Theorem \ref{thm:main}, we don't assume the linear least mean squares estimator (\ref{eq:qkalman}) to be physically realizable.

Let us write $B$ as 
\begin{equation}~\label{eq:Bsubm}
B=[B'_{n\times n_y}\quad B''_{n\times n_w-n_y}].
\end{equation}
In the following, we will announce a general theorem which gives necessary and sufficient conditions ensuring  physical realizability of the least mean squares estimators given in Theorem~\ref{thm:main}.
 \begin{thm}~\label{thm:phys}
\rm Assume that the plant~\eqref{eq:sdq} satisfies the physical realizability conditions announced in Theorem~\ref{thm:prp}. Then, the linear least mean squares estimator announced in Theorem~\ref{thm:main} is a physical realizable estimator if  and only if
\begin{align}~\label{eq:prc}
&-B\textrm{diag}_{\frac{n_w}{2}}(J)B^T+3B'\textrm{diag}_{\frac{n_y}{2}}(J)B'^T+2PC^T\textrm{diag}_{\frac{n_y}{2}}(J)B'^T\nonumber\\
&+2B'\textrm{diag}_{\frac{n_y}{2}}(J)CP+PC^T\textrm{diag}_{\frac{n_y}{2}}(J)CP=0,
\end{align}
with $P$ satisfying the following Riccati equation
\begin{align}~\label{eq:ric}
\dot{P}(t)&=(A-B'C)P(t)+P(t)(A-B'C)^T-P(t)(C^TC)P(t)+B''B''^T,\nonumber\\
P(0)&=C_{e(0)}
\end{align}
\end{thm}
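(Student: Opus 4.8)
The plan is to combine the explicit form of the least mean squares estimator from Theorem~\ref{thm:main} with the physical realizability criterion of Theorem~\ref{thm:prp}, applied now to the \emph{filter} dynamics rather than the plant. Recall that the estimator~\eqref{eq:qkalman} with the optimal gain~\eqref{eq:K-optimal} reads $d\check x(t)=(A-K(t)C)\check x(t)\,dt+K(t)\,dy(t)$, where $K(t)=BD^T+P(t)C^T$. First I would identify the quadruple $(\hat A,\hat B,\hat C,\hat D)$ that describes this system as a standalone QSDE driven by the full noise available to it. The subtlety here is that the filter's input is the output $dy(t)=Cx(t)\,dt+Ddw(t)$ of the plant, so as a standalone system driven by $dw(t)$ (and possibly $dx$-coupling), the effective noise coefficient multiplying $dw$ in $d\check x$ is $K(t)D=(BD^T+PC^T)D$. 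Using $D=[I_{n_y\times n_y}\ \ 0]$ and the block decomposition $B=[B'\ \ B'']$ from~\eqref{eq:Bsubm}, one has $BD^T=B'$ and hence $K(t)D=[B'+PC^T\ \ 0]$; the filter sees effectively the first $n_y$ components of $dw$ (the ones carried along in $dy$), with coefficient $\hat B = B' + PC^T$ on those components and zero on the rest.

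Next I would substitute into the two physical realizability equations of Theorem~\ref{thm:prp}. The second equation, $\hat B\hat D^T=\Theta\hat C^T\textrm{diag}_{\frac{n_y}{2}}(J)$, should be handled first: since $\hat C=C$ (the observation map of the filter is the same $C$), $\hat D$ is again $[I_{n_y\times n_y}\ \ 0]$, and $\hat B\hat D^T = B'+PC^T$, this equation becomes $B'+PC^T=\Theta C^T\textrm{diag}_{\frac{n_y}{2}}(J)$. But the plant is assumed physically realizable, so $BD^T=\Theta C^T\textrm{diag}_{\frac{n_y}{2}}(J)$, i.e. $B'=\Theta C^T\textrm{diag}_{\frac{n_y}{2}}(J)$; therefore the second PR equation for the filter reduces to the requirement $PC^T=0$\,---\,wait, that cannot be the stated conclusion, so the resolution must be that physical realizability here is interpreted with the $\Theta$ for the estimator possibly being \emph{degenerate canonical}, requiring an augmentation, or that the noise the estimator is driven by must be enlarged (as the introduction explicitly warns: ``we are obliged to add additional vacuum noises''). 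So the correct reading is that we must first \emph{augment} the estimator with fresh vacuum noise components so that the resulting $\hat B$ has enough columns, and then the first PR equation $i\hat A\hat\Theta+i\hat\Theta\hat A^T+\hat B T_{\hat w}\hat B^T=0$ is the binding constraint; the second, $\hat B\hat D^T=\hat\Theta\hat C^T\textrm{diag}(J)$, is then automatically consistent because the augmentation columns carry zero $\hat D$-block.

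Thus the heart of the proof is the first PR equation. I would write $\hat A = A - K(t)C = A - (B'+PC^T)C$ and $\hat\Theta=\Theta$ (the estimator's commutation matrix, which equals $\Theta$ by the imposed initial condition $\hat x(0)\hat x(0)^T - (\hat x(0)\hat x(0)^T)^T = 2i\Theta$), and compute $i\hat A\Theta+i\Theta\hat A^T$. Expanding, $i(A-B'C-PC^TC)\Theta + i\Theta(A-B'C-PC^TC)^T$. Using the plant's own first PR relation $iA\Theta+i\Theta A^T = -BT_wB^T$ and the structural identities from the excerpt ($B' = \Theta C^T\textrm{diag}(J)$, $T_w = \textrm{diag}(J)$ on the relevant block, $\Theta C^T = -B'\textrm{diag}(J)$, $C\Theta = \textrm{diag}(J)B'^T$ by transposing, $D^T\textrm{diag}_{\frac{n_y}{2}}(J) = T_wD^T$, and $(\textrm{diag}(J))^2=-I$), one grinds the cross terms: $-i B'C\Theta - i\Theta C^T B'^T$ becomes a multiple of $B'\textrm{diag}(J)B'^T$, the terms $-iPC^TC\Theta - i\Theta C^T C P$ become multiples of $PC^T\textrm{diag}(J)B'^T$ and its transpose, and $-iPC^TC\Theta$-type double-$P$ terms give $PC^T\textrm{diag}(J)CP$. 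Setting $i\hat A\Theta + i\Theta\hat A^T + \hat B T_{\hat w}\hat B^T = 0$ with $\hat B T_{\hat w}\hat B^T$ accounting for the noise $(B'+PC^T)$ on the $n_y$-block (contributing $(B'+PC^T)\textrm{diag}_{\frac{n_y}{2}}(J)(B'+PC^T)^T$) should collapse, after collecting and using $-BT_wB^T = -B\textrm{diag}_{\frac{n_w}{2}}(J)B^T \cdot i$-bookkeeping, exactly to~\eqref{eq:prc}. Finally I would substitute $K(t)=B'+PC^T$, $D=[I\ \ 0]$, $BD^T=B'$, $DD^T=I$ into the generic Riccati equation~\eqref{eq:P-ric} to obtain~\eqref{eq:ric}: the term $(B-KD)(B-KD)^T$ with $B-KD=[B'-K\ \ B'']=[-PC^T\ \ B'']$ yields $PC^TCP+B''B''^T$, and $A-KC=A-B'C-PC^TC$, so~\eqref{eq:P-ric} becomes $\dot P=(A-B'C)P+P(A-B'C)^T - PC^TCP + B''B''^T$, which is~\eqref{eq:ric}. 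The main obstacle I anticipate is bookkeeping the factors of $i$ and the sign conventions when passing between $T_w$, $\textrm{diag}_{\frac{n_w}{2}}(J)$ and $F_w$ (note $T_w=\frac12(F_w-F_w^T)=\textrm{diag}_{\frac{n_w}{2}}(J)$ up to the block structure), and correctly accounting for how the plant noise $dw$ is ``reused'' inside the filter (so that $\hat B T_{\hat w}\hat B^T$ is \emph{not} simply $K T_y K^T$ but involves the genuine overlap structure), which is precisely what forces the coefficient $3$ in front of $B'\textrm{diag}_{\frac{n_y}{2}}(J)B'^T$ in~\eqref{eq:prc}.
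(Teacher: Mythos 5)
Your overall route is the same as the paper's: treat the estimator $d\check x=(A-KC)\check x\,dt+K\,dy$ as a linear QSDE with internal commutation matrix $\Theta$, impose the commutation-preservation condition $i(A-KC)\Theta+i\Theta(A-KC)^T+KDT_wD^TK^T=0$, substitute $K=B'+PC^T$, and eliminate $A\Theta+\Theta A^T$ and $C\Theta$ via the plant's own realizability identities; the Riccati equation~\eqref{eq:ric} then follows by the substitution of $K$ into~\eqref{eq:P-ric} exactly as you describe. Two points in your narrative should be corrected, though neither derails the computation. First, the detour about the second realizability equation is unnecessary: the estimator~\eqref{eq:qkalman} carries no output equation, so the paper imposes only the first condition of Theorem~\ref{thm:prp}; no augmentation by fresh vacuum noise enters the proof of Theorem~\ref{thm:phys} (that device appears only later, as a remedy when realizability fails, and it would alter the Riccati equation by an extra $bb^T$ term, hence cannot be smuggled into this theorem). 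Second, your claim that $\hat B T_{\hat w}\hat B^T$ is ``not simply $KT_yK^T$'' because of noise reuse is wrong: it is exactly $KT_yK^T=(B'+PC^T)\,\mathrm{diag}_{\frac{n_y}{2}}(J)\,(B'+PC^T)^T$. The coefficient $3$ in front of $B'\mathrm{diag}_{\frac{n_y}{2}}(J)B'^T$ in~\eqref{eq:prc} arises as $2+1$: two copies come from the cross terms $-KC\Theta-\Theta C^TK^T$ after substituting $C\Theta=-\mathrm{diag}_{\frac{n_y}{2}}(J)B'^T$ and $\Theta C^T=-B'\mathrm{diag}_{\frac{n_y}{2}}(J)$, and one more from expanding the quadratic term; the same bookkeeping produces the factors of $2$ on the mixed $P$-terms and the single $PC^T\mathrm{diag}_{\frac{n_y}{2}}(J)CP$. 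With those corrections your outline reproduces the paper's proof.
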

\begin{proof}
The estimator of the form~\eqref{eq:qkalman} can be rewritten as the following form
\begin{equation*}
d\hat{x}(t)=(A-KC)\hat{x}(t)dt+K (t) dy(t) , \ \ \hat{x}(0)=\hat x_0.
\end{equation*}
If we impose the physical realizability constraints on the estimator
of the form given in above, we get the following condition 
\begin{equation*}
i(A-KC)\Theta+i\Theta (A-KC)^T+KD T_w D^T K^T=0.
\end{equation*}
Now it is sufficient to replace $K$ by its value determined by Theorem~\ref{thm:main} (Equation~\eqref{eq:K-optimal}). We find 
\begin{align}~\label{eq:prg}
&A\Theta+\Theta A^T-BD^T C\Theta-\Theta C^T DB^T-PC^TC\Theta-\Theta C^T C P\nonumber\\
&+BD^T\textrm{diag}_{\frac{n_y}{2}}(J)DB^T+PC^T\textrm{diag}_{\frac{n_y}{2}}(J)CP\nonumber\\
&+BD^T\textrm{diag}_{\frac{n_y}{2}}(J)CP+PC^T\textrm{diag}_{\frac{n_y}{2}}(J)DB^T=0,
\end{align}
where we have used $D\textrm{diag}_{\frac{n_w}{2}}(J)D^T=\textrm{diag}_{\frac{n_y}{2}}(J).$ Now use the following facts
$C\Theta=-\textrm{diag}_{\frac{n_y}{2}}(J)DB^T$ and $A\Theta+\Theta A^T=-B\textrm{diag}_{\frac{n_w}{2}}(J)B^T,$ which are derived from the physical realizability of the plant, i.e., Equation~\eqref{eq:pr}. Also, note that $BD^T=B'.$ From these equalities, Equation~\eqref{eq:prc} can be derived from Equation~\eqref{eq:prg}.

Moreover, Equation~\eqref{eq:ric} is derived from the Riccati equation~\eqref{eq:P-new} by replacing $K$ by its value given in Equation~\eqref{eq:K-optimal} and using the physical realizability of the plant.
\end{proof}
\subsection{Some special cases} 
In the following, first, we study the physical realizability of the least mean squares estimator announced in Theorem~\ref{thm:main} for the case where 

\noindent $B'\textrm{diag}_{\frac{n_y}{2}}(J) B'^T=0,$ with $B'$ defined in Equation~\eqref{eq:Bsubm}. Second, we study the case where $B'=0.$ 
\subsubsection*{Case $\bf 1:$ $\bf B'\textrm{diag}_{\frac{n_y}{2}}(J) B'^T=0$}
 As it is demonstrated in the following corollary, the physical realizability constraint announced in~\eqref{eq:prc} can be simplified. 
 \begin{coro}~\label{coro:phys1}
\rm If $B'\textrm{diag}_{\frac{n_y}{2}}(J) B'^T=0.$ Then, the estimator of the form~\eqref{eq:qkalman} is a physical realizable least mean squares estimator if and only if the following constraints are satisfied.
\begin{itemize}
\item [(i)] $K=B'+PC^T.$
\item[(ii)] For $\Theta$ canonical,
\begin{align}~\label{eq:prc1m}
B''\textrm{diag}_{\frac{n_w-n_y}{2}}(J)B''^T+2P\Theta B'B'^T+2B'B'^T\Theta P=0,
\end{align}
with $P$ the unique symmetric positive definite solution of the following Riccati equation,
\begin{align}~\label{eq:rc}
\dot{P}(t)&=AP(t)+P(t)A^T+P\Theta B'B'^T\Theta P(t)+B''B''^T,\nonumber\\
P(0)&=C_{e(0)}.
\end{align}
\item [(iii)] For $\Theta$ degenerate canonical, 
\begin{itemize}
\item [(i)] if $\textrm{diag}\left(0_{n'\times n'},\textrm{diag}_{\frac{n-n'}{2}}(I)\right)C^T=C^T$, then
\begin{align*}
B''\textrm{diag}_{\frac{n_w-n_y}{2}}(J)B''^T+2P\Theta B'B'^T+2B'B'^T\Theta P=0,
\end{align*}
with $P$ satisfying dynamics~\eqref{eq:rc}.

\item [(ii)] But if $\textrm{diag}\left(0_{n'\times n'},\textrm{diag}_{\frac{n-n'}{2}}(I)\right)C^T\neq C^T$ holds, then, 
\begin{align}~\label{eq:prc2m}
&-B''\textrm{diag}_{\frac{n_w-n_y}{2}}(J)B''^T+2PC^T\textrm{diag}_{\frac{n_y}{2}}(J)B'^T\nonumber\\
&+2B'\textrm{diag}_{\frac{n_y}{2}}(J)CP+PC^T\textrm{diag}_{\frac{n_y}{2}}(J)CP=0,
\end{align}
with $P$ satisfying Equation~\eqref{eq:ric}.
\end{itemize}
\end{itemize}
\end{coro}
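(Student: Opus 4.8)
The plan is to obtain Corollary~\ref{coro:phys1} as a specialization of Theorem~\ref{thm:phys}: that theorem already characterises physical realizability of the least mean squares estimator as the matrix identity \eqref{eq:prc} holding together with the reduced Riccati equation \eqref{eq:ric}, so the task reduces to simplifying these two conditions under the extra hypothesis $B'\textrm{diag}_{\frac{n_y}{2}}(J)B'^T=0$ and the form of $\Theta$. Item~(i) needs no work: by Theorem~\ref{thm:main} the gain of any least mean squares estimator is $K=BD^T+PC^T$, and $BD^T=B'$ since $D=[\,I_{n_y\times n_y}\ \ 0_{n_y\times(n_w-n_y)}\,]$.

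First I would simplify \eqref{eq:prc} using only the hypothesis. The summand $3B'\textrm{diag}_{\frac{n_y}{2}}(J)B'^T$ disappears. Writing $B=[\,B'\ \ B''\,]$ and using the conventions of Section~\ref{sec:three} under which $\textrm{diag}_{\frac{n_w}{2}}(J)$ splits block-diagonally into its first $\frac{n_y}{2}$ blocks and the remaining ones, we have $B\textrm{diag}_{\frac{n_w}{2}}(J)B^T=B'\textrm{diag}_{\frac{n_y}{2}}(J)B'^T+B''\textrm{diag}_{\frac{n_w-n_y}{2}}(J)B''^T$, so the hypothesis turns $-B\textrm{diag}_{\frac{n_w}{2}}(J)B^T$ into $-B''\textrm{diag}_{\frac{n_w-n_y}{2}}(J)B''^T$. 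These two substitutions are equivalences, so \eqref{eq:prc} is equivalent to \eqref{eq:prc2m}; paired with the unchanged Riccati equation \eqref{eq:ric} this is precisely subcase (iii)(ii), which therefore requires no further argument.

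The remaining reduction applies whenever $\Theta$ restricts to an invertible map on the range of $C^T$. In the canonical case $\Theta^2=-I_n$, and the plant's physical realizability \eqref{eq:pr} gives $\Theta C^T=-B'\textrm{diag}_{\frac{n_y}{2}}(J)$, hence the identities $C^T=\Theta B'\textrm{diag}_{\frac{n_y}{2}}(J)$ and $C=\textrm{diag}_{\frac{n_y}{2}}(J)B'^T\Theta$. Substituting these into \eqref{eq:prc2m}, repeatedly collapsing $(\textrm{diag}_{\frac{n_y}{2}}(J))^2=-I_{n_y}$, and noting that the term quadratic in $P$ equals (up to sign) $\Theta B'\textrm{diag}_{\frac{n_y}{2}}(J)B'^T\Theta=0$, reduces \eqref{eq:prc2m} to \eqref{eq:prc1m}; similarly $B'C=B'\textrm{diag}_{\frac{n_y}{2}}(J)B'^T\Theta=0$ and $C^TC=-\Theta B'B'^T\Theta$, so \eqref{eq:ric} becomes \eqref{eq:rc}. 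This gives~(ii), with $P$ the error covariance furnished by Theorem~\ref{thm:main}. For subcase (iii)(i) I would write $\Theta=\textrm{diag}(0_{n'\times n'},\tilde\Theta)$ with $\tilde\Theta=\textrm{diag}_{\frac{n-n'}{2}}(J)$ invertible; the hypothesis $\textrm{diag}(0_{n'\times n'},\textrm{diag}_{\frac{n-n'}{2}}(I))C^T=C^T$ says $C^T$ is supported on the canonical block, and then via $B'=\Theta C^T\textrm{diag}_{\frac{n_y}{2}}(J)$ so is $B'$, so the very same computation as in the canonical case goes through verbatim on that block and yields \eqref{eq:prc1m} and \eqref{eq:rc}.

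I expect the only real difficulty to be bookkeeping rather than conceptual: verifying that the block decomposition of $\textrm{diag}_{\frac{n_w}{2}}(J)$ into the $dy$-block and its complement is legitimate under the paper's conventions, and, in the degenerate canonical case, checking that ``$C^T$ lies in the canonical block'' is exactly what is needed for the inversion step $\Theta C^T=-B'\textrm{diag}_{\frac{n_y}{2}}(J)\Rightarrow C^T=\Theta B'\textrm{diag}_{\frac{n_y}{2}}(J)$, and that when this fails there is no substitute identity for $C$, which is why subcase (iii)(ii) must be left in the weaker form \eqref{eq:prc2m}--\eqref{eq:ric}.
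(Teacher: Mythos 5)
Your proposal is correct and follows essentially the same route as the paper: reduce to Theorem~\ref{thm:phys}, use $B'\textrm{diag}_{\frac{n_y}{2}}(J)B'^T=0$ and the block structure of $\textrm{diag}_{\frac{n_w}{2}}(J)$ to collapse \eqref{eq:prc} to \eqref{eq:prc2m}, and then invert $\Theta$ (or its canonical block) in the plant's realizability identity $\Theta C^T=-B'\textrm{diag}_{\frac{n_y}{2}}(J)$ to substitute for $C$ and obtain \eqref{eq:prc1m} and \eqref{eq:rc}. Your explicit verification that the quadratic term $PC^T\textrm{diag}_{\frac{n_y}{2}}(J)CP$ and the cross term $B'C$ vanish is a detail the paper leaves implicit, but the argument is the same.
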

\begin{proof}
By Theorem~\ref{thm:phys}, we know that the least mean squares estimator~\eqref{eq:qkalman} is physically realizable if and only if the condition~\eqref{eq:prc} is satisfied. By the assumption $B'\textrm{diag}_{\frac{n_y}{2}}(J) B'^T=0,$ we get 
\begin{align*}
B\textrm{diag}_{\frac{n_w}{2}}(J)B^T=B''\textrm{diag}_{\frac{n_w-n_y}{2}}(J)B''^T.
\end{align*}
Then, the condition~\eqref{eq:prc} becomes
\begin{align}~\label{eq:prcc}
-B''\textrm{diag}_{\frac{n_w-n_y}{2}}(J)B''^T&+2PC^T\textrm{diag}_{\frac{n_y}{2}}(J)B'^T\nonumber\\
&+2B'\textrm{diag}_{\frac{n_y}{2}}(J)CP+PC^T\textrm{diag}_{\frac{n_y}{2}}(J)CP=0.
\end{align}
We know by the physical realizability of the plant $C\Theta=-\textrm{diag}_{\frac{n_y}{2}}(J)DB^T$ 
(and similarly, $\Theta C^T=-BD^T\textrm{diag}_{\frac{n_y}{2}}(J)$).
Now suppose that $\Theta$ is canonical, then by multiplying the above conditions by $\Theta,$ we find
$C=\textrm{diag}_{\frac{n_y}{2}}(J)DB^T\Theta$ (and similarly, $C^T=\Theta BD^T\textrm{diag}_{\frac{n_y}{2}}(J)$). Finally, by replacing the values of $C$ and $C^T$ found as such, we get the constraint~\eqref{eq:prc1m} given in the second part of the corollary.  


Now we consider the case $\Theta$ degenerate canonical, in this case if we multiply the expression of $C\Theta$ (and similarly, $\Theta C^T$) given in above by $\Theta$, we get 
$C\textrm{diag}\left(0_{n'\times n'},\textrm{diag}_{\frac{n-n'}{2}}(I)\right)=\textrm{diag}_{\frac{n_y}{2}}(J)DB^T\Theta$ 
(and similarly, 

\noindent $\textrm{diag}\left(0_{n'\times n'},\textrm{diag}_{\frac{n-n'}{2}}(I)\right)C^T=\Theta BD^T\textrm{diag}_{\frac{n_y}{2}}(J)$). Now it is clear that if the condition $\textrm{diag}\left(0_{n'\times n'},\textrm{diag}_{\frac{n-n'}{2}}(I)\right)C^T=C^T$ holds, then we get exactly the constraint given in the first part. However, if this condition does not hold, from~\eqref{eq:prc}, we get the constraint~\eqref{eq:prcc} which is exactly the condition~\eqref{eq:prc2m}.
\end{proof}
\begin{rem}
\rm We remark that the condition $B'\textrm{diag}_{\frac{n_y}{2}}(J) B'^T=0$ was considered in order to simplify the physical realizability constraints in Equation~\eqref{eq:prc}. As the corollary in above shows, in most of the times, this case is equivalent to eliminating the quadratic terms in Equation~\eqref{eq:prc}. Also, note that if $n_y=n=2,$ the condition $B'\textrm{diag}_{\frac{n_y}{2}}(J) B'^T=0$ is equivalent to the condition $\det(B')=0$, i.e., the quadratures are linearly dependent.  
\end{rem}
\paragraph{Particular case: $n_y=n_w.$} Consider the case $n_y=n_w.$ Then, a physical realizable plant should satisfy $D=I_{n_y\times n_y}.$ As a result, the plant given in~\eqref{eq:sdq} takes the following form 
\begin{eqnarray}~\label{eq:sdqone}
&&dx\left(t\right)=Ax\left(t\right)dt+B\,dw\left(t\right),\nonumber\\
&&dy\left(t\right)=Cx\left(t\right)dt+dw\left(t\right).
\end{eqnarray}
Now let us state the following corollary as analogue of Corollary~\ref{coro:phys1}.
\begin{coro}~\label{coro:phystwo}
\rm The estimator of the form~\eqref{eq:qkalman} associated to the plant's dynamics~\eqref{eq:sdqone} is a physical realizable least mean squares estimator if and only if the following constraints are satisfied
\begin{itemize}
\item [(i)] $K=B+PC^T.$
\item [(ii)] For $\Theta$ canonical, 
\begin{align}~\label{eq:prcone}
2P\Theta BB^T+2BB^T\Theta P=0,
\end{align}
with $P$ satisfying the following Riccati equation
\begin{align}~\label{eq:rcone}
\dot{P}(t)&=AP(t)+P(t)A^T+P(t)\Theta BB^T\Theta P(t),\nonumber\\
P(0)&=C_{e(0)}.
\end{align}
\item [(iii)] For $\Theta$ degenerate canonical, 
\begin{itemize}
\item[(i)] if $\textrm{diag}\left(0_{n'\times n'},\textrm{diag}_{\frac{n-n'}{2}}(I)\right)C^T=C^T$, then
\begin{align*}
2P\Theta BB^T+2BB^T\Theta P=0,
\end{align*}
with $P$ satisfying dynamics~\eqref{eq:rcone}.

\item [(ii)] But if $\textrm{diag}\left(0_{n'\times n'},\textrm{diag}_{\frac{n-n'}{2}}(I)\right)C^T\neq C^T$ holds, then, 
\begin{align}~\label{eq:cons3}
2PC^T\textrm{diag}_{\frac{n_y}{2}}(J)B^T
&+2B\textrm{diag}_{\frac{n_y}{2}}(J)CP\nonumber\\
&+PC^T\textrm{diag}_{\frac{n_y}{2}}(J)CP=0,
\end{align}
with $P$ satisfying the following dynamics
\begin{align}~\label{eq:rics}
\dot{P}(t)&=(A-BC)P(t)+P(t)(A-BC)^T-P(t)(C^TC)P(t)\nonumber\\
P(0)&=C_{e(0)}
\end{align}
\end{itemize}
\end{itemize}
\end{coro}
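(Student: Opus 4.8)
The plan is to specialize the general result of Corollary~\ref{coro:phys1} to the setting $n_y=n_w$, where physical realizability of the plant forces $D=I_{n_y\times n_y}$, so the decomposition $B=[B'\ B'']$ of Equation~\eqref{eq:Bsubm} collapses to $B'=B$ and $B''$ being empty (a zero-column block). First I would observe that with $D=I$, the hypothesis $B'\textrm{diag}_{\frac{n_y}{2}}(J)B'^T=0$ of Corollary~\ref{coro:phys1} becomes $B\textrm{diag}_{\frac{n_w}{2}}(J)B^T=0$; but in fact, with $D=I$ the plant physical realizability relation $BD^T=\Theta C^T\textrm{diag}_{\frac{n_y}{2}}(J)$ reads $B=\Theta C^T\textrm{diag}_{\frac{n_y}{2}}(J)$, and combined with $iA\Theta+i\Theta A^T+BT_wB^T=0$ one checks directly that $B\textrm{diag}_{\frac{n_w}{2}}(J)B^T=\Theta C^T C\Theta$, which need not vanish — so I would \emph{not} route through Corollary~\ref{coro:phys1} but instead go back to the general constraint~\eqref{eq:prc} and~\eqref{eq:ric} of Theorem~\ref{thm:phys} and simplify them directly in the $n_y=n_w$, $D=I$ case.

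The key computational steps are then: (1) Substitute $B'=B$, drop all $B''$ terms, and set $D=I$ in the gain formula $K=BD^T+PC^T$ to get part (i), $K=B+PC^T$. (2) In the Riccati equation~\eqref{eq:ric}, the term $B''B''^T$ disappears, giving $\dot P=(A-BC)P+P(A-BC)^T-P(C^TC)P$, which is Equation~\eqref{eq:rics}; this already handles case (iii)(ii). (3) For the canonical and the "aligned" degenerate-canonical subcases I would use the plant relations $C\Theta=-\textrm{diag}_{\frac{n_y}{2}}(J)B^T$ and $\Theta C^T=-B\textrm{diag}_{\frac{n_y}{2}}(J)$ (valid here because $D=I$) to rewrite~\eqref{eq:prc}. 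Multiplying through appropriately by $\Theta$ — exactly as in the proof of Corollary~\ref{coro:phys1} — converts $C\Theta$ into $\textrm{diag}_{\frac{n_y}{2}}(J)B^T\Theta$ (canonical) or its degenerate analogue, and substituting these into the reduced constraint should collapse the cubic-and-quadratic expression to the clean form $2P\Theta BB^T+2BB^T\Theta P=0$ of Equation~\eqref{eq:prcone}, with the Riccati equation simultaneously reducing to $\dot P=AP+PA^T+P\Theta BB^T\Theta P$ of Equation~\eqref{eq:rcone} (the $-P C^TC P$ term becoming $+P\Theta BB^T\Theta P$ after the same substitution $C^T=\Theta B\textrm{diag}_{\frac{n_y}{2}}(J)$ and $(\textrm{diag}_{\frac{n_y}{2}}(J))^2=-I$, and the linear terms $-(BC)P-P(BC)^T$ cancelling against cross terms). (4) The degenerate-canonical split on whether $\textrm{diag}(0_{n'\times n'},\textrm{diag}_{\frac{n-n'}{2}}(I))C^T=C^T$ is inherited verbatim from Corollary~\ref{coro:phys1}: in the aligned case the $\Theta$-multiplication is lossless and we land on~\eqref{eq:prcone}; otherwise we keep~\eqref{eq:cons3}, which is just~\eqref{eq:prc} with $B'=B$, $B''=0$, $D=I$.

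I expect the main obstacle to be bookkeeping rather than conceptual: carefully tracking how the three families of terms in~\eqref{eq:prc} — the pure-$B$ term $-B\textrm{diag}_{\frac{n_w}{2}}(J)B^T+3B'\textrm{diag}_{\frac{n_y}{2}}(J)B'^T$, the mixed terms $2PC^T\textrm{diag}_{\frac{n_y}{2}}(J)B'^T+2B'\textrm{diag}_{\frac{n_y}{2}}(J)CP$, and the quadratic term $PC^T\textrm{diag}_{\frac{n_y}{2}}(J)CP$ — transform under $B'=B$, $D=I$, and the substitutions $C\Theta=-\textrm{diag}_{\frac{n_y}{2}}(J)B^T$, so that the coefficient "$3$" together with the $-B\textrm{diag}_{\frac{n_w}{2}}(J)B^T$ contribution (which equals $-\Theta C^TC\Theta = B\textrm{diag}_{\frac{n_y}{2}}(J)B^T$ when pushed through, i.e. $+B'\textrm{diag}_{\frac{n_y}{2}}(J)B'^T$ with the $\textrm{diag}_{\frac{n_w-n_y}{2}}(J)B''^T$ piece gone) recombine to give the factor producing $2P\Theta BB^T+2BB^T\Theta P$. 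One must be attentive that the antisymmetry $\textrm{diag}_{\frac{n_y}{2}}(J)^T=-\textrm{diag}_{\frac{n_y}{2}}(J)$ is what makes the final constraint symmetric, as it must be, and that in the $n_y=n=2$ situation this is nothing but the scalar condition $\det B=0$ noted in the remark. Once the substitutions are organized this way, each of the three cases (i)--(iii) follows by direct verification.
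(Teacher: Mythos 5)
Your proposal has a genuine gap, and it stems precisely from the one place where you deliberately departed from the paper's structure. Corollary~\ref{coro:phystwo} sits inside the paper's ``Case 1'' subsection, so the standing hypothesis $B'\textrm{diag}_{\frac{n_y}{2}}(J)B'^T=0$ is still in force; the paper's entire proof is the observation that for $n_y=n_w$ (so $B'=B$, $B''$ empty) this hypothesis reads $B\textrm{diag}_{\frac{n_w}{2}}(J)B^T=0$, after which Corollary~\ref{coro:phys1} applies verbatim. You correctly note that $B\textrm{diag}_{\frac{n_w}{2}}(J)B^T$ does not vanish automatically for a physically realizable plant with $D=I$ (though your identity should read $B\textrm{diag}_{\frac{n_y}{2}}(J)B^T=-\Theta C^T\textrm{diag}_{\frac{n_y}{2}}(J)C\Theta$, not $\Theta C^TC\Theta$), but you then discard the hypothesis and claim that \eqref{eq:prc} still collapses to \eqref{eq:prcone}. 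It does not. Setting $B'=B$ and dropping $B''$ in \eqref{eq:prc} leaves $-B\textrm{diag}_{\frac{n_w}{2}}(J)B^T+3B\textrm{diag}_{\frac{n_y}{2}}(J)B^T=2B\textrm{diag}_{\frac{n_y}{2}}(J)B^T$ from the pure-$B$ terms, and after substituting $C=\textrm{diag}_{\frac{n_y}{2}}(J)B^T\Theta$ and $C^T=\Theta B\,\textrm{diag}_{\frac{n_y}{2}}(J)$ (canonical $\Theta$) the full constraint becomes
\[
2B\textrm{diag}_{\tfrac{n_y}{2}}(J)B^T-2P\Theta BB^T-2BB^T\Theta P-P\Theta B\,\textrm{diag}_{\tfrac{n_y}{2}}(J)B^T\Theta P=0,
\]
which differs from \eqref{eq:prcone} by the first and last terms; these vanish only when $B\textrm{diag}_{\frac{n_y}{2}}(J)B^T=0$. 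Likewise the reduction of the Riccati equation \eqref{eq:rics} to \eqref{eq:rcone} requires $BC=B\textrm{diag}_{\frac{n_y}{2}}(J)B^T\Theta=0$: the terms $-BCP-P(BC)^T$ do not ``cancel against cross terms'' (there are none to cancel against); they are zero because of the hypothesis. Even your reading of case (iii)(ii) is off for the same reason: \eqref{eq:prc} with $B'=B$, $B''=0$ carries the extra $2B\textrm{diag}_{\frac{n_y}{2}}(J)B^T$ relative to \eqref{eq:cons3}, which matches \eqref{eq:prc2m} of Corollary~\ref{coro:phys1} (with $B''=0$) only under the Case-1 assumption.

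In short: the corollary as stated is correct only under the inherited assumption $B\textrm{diag}_{\frac{n_w}{2}}(J)B^T=0$, the paper proves it by direct appeal to Corollary~\ref{coro:phys1} with $B''$ deleted, and your attempt to prove it without that assumption cannot reach the stated conclusions --- carried out honestly, your computation produces the extra terms displayed above. Your parts (i) and the identification of \eqref{eq:rics} are fine; the gap is confined to steps (3)--(4).
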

\begin{proof}
The proof of this corollary can be done by the same arguments provided for Corollary~\ref{coro:phys1}. Since, if $n_y=n_w,$ the condition $B'\textrm{diag}_{\frac{n_y}{2}}(J) B'^T=0$ is equivalent to $B\textrm{diag}_{\frac{n_w}{2}}(J) B^T=0.$
\end{proof}
\paragraph{Particular case: $n=2,$ $n_y=2,$ $n_w=4,$ and $\Theta=J.$} Consider the simple case $n=2,$ $n_y=2,$ $n_w=4,$ and $\Theta=J.$ Take $A=\begin{pmatrix}a_1&a_2\\
a_3&a_4\end{pmatrix},$ $P=\begin{pmatrix}p_1&p_2\\
p_2&p_4\end{pmatrix},$ $B'=\begin{pmatrix}b_1&b_2\\
b_3&b_4\end{pmatrix},$ and $B''=\begin{pmatrix}d_1&d_2\\d_3&d_4\end{pmatrix}.$ 
In the following, we find the constraints which guarantee the physical realizability of the least mean squares estimator announced in Theorem~\ref{thm:main}.
\begin{coro}~\label{coro:spp}
\rm Take $\Theta=J$. If $B'J B'^T=0,$ then, the least mean squares estimator given in Theorem~\ref{thm:main} is physically realizable if and only if  \begin{equation}~\label{eq:phntwo}
2p_1(-b_4^2-b_3^2)+2p_2(2b_1b_3+2b_2b_4)+2p_4(-b_1^2-b_2^2)-\det(B'')=0,\end{equation} with $P$ satisfying the Riccati equation~\eqref{eq:rc}.
\end{coro}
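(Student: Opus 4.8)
The plan is to specialize the general physical realizability condition~\eqref{eq:prc2m} of Corollary~\ref{coro:phys1}(iii)(ii) to the concrete $2\times 2$ data $n=2$, $n_y=2$, $n_w=4$, $\Theta=J$, and then carry out the resulting matrix algebra explicitly. First I would note that with $\Theta=J$ the matrix $\Theta$ is canonical (not degenerate canonical), so the relevant branch of Corollary~\ref{coro:phys1} is part~(ii), and under the standing hypothesis $B'JB'^T=0$ the physical realizability condition reduces to~\eqref{eq:prc1m}, namely
\begin{align*}
B''\,\textrm{diag}_{\frac{n_w-n_y}{2}}(J)\,B''^T+2P\Theta B'B'^T+2B'B'^T\Theta P=0,
\end{align*}
with $P$ the symmetric positive definite solution of the Riccati equation~\eqref{eq:rc}. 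Here $n_w-n_y=2$, so $\textrm{diag}_{\frac{n_w-n_y}{2}}(J)=J$, and $\Theta=J$. Hence the task is to expand $B''JB''^T+2PJB'B'^T+2B'B'^TJP=0$ componentwise using the parametrizations $A,P,B',B''$ given just before the corollary.

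The key computational steps, in order, are: (1) compute $B''JB''^T$ with $B''=\begin{pmatrix}d_1&d_2\\d_3&d_4\end{pmatrix}$; since $J=\begin{pmatrix}0&1\\-1&0\end{pmatrix}$, a direct multiplication gives $B''JB''^T=(d_1d_4-d_2d_3)J=\det(B'')\,J$, an antisymmetric matrix. (2) Compute $G:=B'B'^T=\begin{pmatrix}b_1^2+b_2^2 & b_1b_3+b_2b_4\\ b_1b_3+b_2b_4 & b_3^2+b_4^2\end{pmatrix}$, a symmetric matrix, and then form $2PJG+2GJP$; since $(PJG)^T=G^TJ^TP^T=-GJP$, the combination $PJG+GJP$ is antisymmetric, so it equals $\textrm{tr}(\text{something})\cdot J$ — more precisely any antisymmetric $2\times2$ matrix is a scalar multiple of $J$, and that scalar can be read off from the $(1,2)$ entry. (3) Extract the $(1,2)$ entries of both $B''JB''^T$ and $2(PJG+GJP)$ and set their sum to zero; this is the single scalar equation that the $2\times2$ antisymmetric matrix identity collapses to. Writing out $P=\begin{pmatrix}p_1&p_2\\p_2&p_4\end{pmatrix}$ and $G$ as above, the $(1,2)$ entry of $2(PJG+GJP)$ works out to $2p_1(-b_3^2-b_4^2)+2p_2(2b_1b_3+2b_2b_4)+2p_4(-b_1^2-b_2^2)$ (up to collecting the cross terms, which is exactly the coefficient pattern appearing in~\eqref{eq:phntwo}), and the $(1,2)$ entry of $B''JB''^T$ is $\det(B'')$. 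Setting the sum to zero — equivalently, after a harmless sign normalization, $2p_1(-b_4^2-b_3^2)+2p_2(2b_1b_3+2b_2b_4)+2p_4(-b_1^2-b_2^2)-\det(B'')=0$ — yields~\eqref{eq:phntwo}. Conversely, since a $2\times2$ antisymmetric matrix vanishes iff its $(1,2)$ entry does, this scalar condition is also sufficient for~\eqref{eq:prc1m} to hold. Finally, one records that the Riccati equation~\eqref{eq:rc} is precisely the one governing $P$ here, since $B'JB'^T=0$ forces $B\,\textrm{diag}_{\frac{n_w}{2}}(J)\,B^T = B''JB''^T$ to have no effect beyond the stated form, matching~\eqref{eq:rc}.

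The main obstacle is bookkeeping rather than conceptual: one must be careful that all three matrices entering~\eqref{eq:prc1m} are genuinely antisymmetric (so that the matrix identity really is a single scalar equation and not three), and one must track signs correctly through the products with $J$, since $J^T=-J$ and $J^2=-I$. The slightly subtle point worth stating explicitly in the proof is the reduction ``antisymmetric $2\times2$ matrix $=0$ $\iff$ its $(1,2)$ entry $=0$,'' which is what turns the matrix equation into the scalar equation~\eqref{eq:phntwo}; everything else is the routine expansion of $2\times2$ products, and I would present only the final collected form of the $(1,2)$ entry rather than every intermediate monomial. I would also remark that under $\Theta=J$ one checks $\det(B')=0$ is equivalent to $B'JB'^T=0$ (as noted in the preceding remark), so the hypothesis is the natural nondegeneracy-failure condition on the quadratures.
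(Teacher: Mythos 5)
Your proposal is correct and matches the paper's approach exactly: the paper's proof is the one-line remark that the corollary ``can be directly derived from Equation~\eqref{eq:prc1m},'' and you carry out precisely that specialization, correctly identifying that all three terms are antisymmetric $2\times 2$ matrices so the condition collapses to the single scalar equation read off from the $(1,2)$ entry (your final expression agrees with~\eqref{eq:phntwo} after the overall sign flip you note). No gaps; you have simply supplied the computation the paper omits.
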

\begin{proof}
The proof can be directly derived from Equation~\eqref{eq:prc1m}. 
\end{proof}
Now, we can conclude the following corollary.
\begin{coro}~\label{coro:spp1}
\rm Suppose $b_1=b_3,$ $b_2=b_4,$ and  $\det(B'')=0.$ Then, the linear least mean squares estimator announced in Theorem~\ref{thm:main}, is physically realizable if and only if 
$p_1+p_4=2p_2.$
\end{coro}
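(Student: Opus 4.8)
The plan is to read the conclusion straight off Corollary~\ref{coro:spp}, after first checking that the hypothesis $B'JB'^T=0$ of that corollary is implied by the present assumptions. For the given $2\times2$ matrices one computes $B'JB'^T = (b_1b_4-b_2b_3)\,J = \det(B')\,J$, so $B'JB'^T=0$ is equivalent to $\det(B')=0$; and under $b_1=b_3$, $b_2=b_4$ we have $\det(B') = b_1b_4-b_2b_3 = b_1b_2-b_2b_1 = 0$. Hence Corollary~\ref{coro:spp} applies, and physical realizability of the least mean squares estimator of Theorem~\ref{thm:main} is equivalent to the scalar identity~\eqref{eq:phntwo} holding with $P$ the solution of the Riccati equation~\eqref{eq:rc}.

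Next I would substitute $b_3=b_1$ and $b_4=b_2$ into the left-hand side of~\eqref{eq:phntwo}. The coefficient of $p_1$ becomes $-(b_1^2+b_2^2)$, the coefficient of $p_2$ becomes $2(b_1^2+b_2^2)$, and the coefficient of $p_4$ becomes $-(b_1^2+b_2^2)$, so the expression collapses to
\[
2(b_1^2+b_2^2)\bigl(2p_2-p_1-p_4\bigr)-\det(B'').
\]
Imposing the hypothesis $\det(B'')=0$, condition~\eqref{eq:phntwo} reduces to $2(b_1^2+b_2^2)(2p_2-p_1-p_4)=0$. Since $B'$ is not the zero matrix (that degenerate situation corresponds to $b_1=b_2=0$ and is handled separately in the $B'=0$ case), we have $b_1^2+b_2^2\neq0$, so this is equivalent to $p_1+p_4=2p_2$. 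For the converse direction, if $p_1+p_4=2p_2$ then the displayed expression vanishes identically, so~\eqref{eq:phntwo} holds and Corollary~\ref{coro:spp} again gives physical realizability.

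There is no genuinely hard step here: the argument is an algebraic simplification of~\eqref{eq:phntwo}. The only point that needs care is the bookkeeping of the common factor $b_1^2+b_2^2$ and the accompanying remark that the ``only if'' direction is stated under the implicit understanding $B'\neq0$; if one wanted a fully unconditional statement one would phrase the equivalence as ``$B'=0$ or $p_1+p_4=2p_2$''. I would flag this explicitly rather than suppress it.
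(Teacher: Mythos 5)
Your proof is correct and follows exactly the route the paper intends (the paper gives no explicit proof, simply deducing the corollary by substituting $b_3=b_1$, $b_4=b_2$, $\det(B'')=0$ into condition~\eqref{eq:phntwo} of Corollary~\ref{coro:spp}); your preliminary check that these hypotheses force $B'JB'^T=\det(B')J=0$, so that Corollary~\ref{coro:spp} indeed applies, is a worthwhile addition. Your caveat about the degenerate case $b_1=b_2=0$, where the factor $b_1^2+b_2^2$ vanishes and the ``only if'' direction breaks down, is a genuine (if minor) imprecision in the paper's statement and is rightly flagged.
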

The following corollary shows the difficulty of finding a physical realizable least mean squares estimator for some particular forms of $P,$ $B'$ and $B''$.
\begin{coro}~\label{coro:spp2}
\rm Suppose $b_1=b_3,$ $b_2=b_4,$ $d_1=d_3,$ and $d_2=d_4.$ Then, it is impossible to realize physically a linear least mean squares estimator of the form given in~\eqref{eq:qkalman} such that $p_1=p_2=p_4$.
\end{coro}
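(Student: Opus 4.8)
The plan is to combine the reduction of the physical realizability conditions obtained in the preceding corollaries with a strict-positivity obstruction on the symmetrized error covariance. First I would record two elementary consequences of the hypotheses: since $b_1=b_3$ and $b_2=b_4$, a direct $2\times2$ computation gives $B'JB'^T=0$, so we are in the setting of Corollaries~\ref{coro:spp}--\ref{coro:phys1}; and since $d_1=d_3$ and $d_2=d_4$, one has $\det(B'')=0$. Consequently Corollary~\ref{coro:spp1} applies: an estimator of the form~\eqref{eq:qkalman} is physically realizable if and only if $p_1+p_4=2p_2$, with $P$ the symmetric positive definite solution of the Riccati equation~\eqref{eq:rc}.

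Next I would argue by contradiction, assuming such a physically realizable least mean squares estimator exists with $p_1(t)=p_2(t)=p_4(t)$ for all $t\geq0$. The physical realizability constraint $p_1+p_4=2p_2$ is then automatically satisfied, so the obstruction does not come from the realizability algebra; rather, $p_1=p_2=p_4$ forces $\det P=p_1p_4-p_2^2=0$, so $P(t)$ is singular, which contradicts the requirement in Theorem~\ref{thm:main} that $P$ be \emph{positive definite}. To see that this positivity genuinely fails here, I would evaluate at $t=0$: $P(0)=C_{e(0)}$ is the symmetrized covariance of $e(0)=x(0)-\hat x(0)$, and since $x(0)$ and $\hat x(0)$ have zero mean and are independent, $P(0)=C_{x(0)}+C_{\hat x(0)}$. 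Each summand is the symmetrized covariance of a one-mode Gaussian state with commutation matrix $2iJ$, so the Robertson--Schr\"odinger inequality gives $C_{x(0)}+iJ\geq0$ and $C_{\hat x(0)}+iJ\geq0$, hence $\det C_{x(0)}\geq1$ and $\det C_{\hat x(0)}\geq1$; in particular $P(0)$ is strictly positive definite and $\det P(0)>0$, contradicting $\det P(0)=0$. This is the desired contradiction.

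I would also add a complementary computation that makes the ``difficulty'' concrete independently of the positivity argument. Substituting $p_1=p_2=p_4=:p$ into~\eqref{eq:rc} and using that $B'B'^T$ equals $b_1^2+b_2^2$ times the all-ones $2\times2$ matrix (which makes the quadratic term $P\Theta B'B'^T\Theta P$ vanish on this ansatz) and that $B''B''^T$ equals $d_1^2+d_2^2$ times the all-ones matrix, the three scalar components of~\eqref{eq:rc} become $\dot p=2(a_1+a_2)p+\delta=(a_1+a_2+a_3+a_4)p+\delta=2(a_3+a_4)p+\delta$ with $\delta=d_1^2+d_2^2$; since $p\equiv0$ is excluded (again because $P(0)=C_{e(0)}\neq0$), these three equations are compatible only if $a_1+a_2=a_3+a_4$, a non-generic relation among the plant parameters.

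The step I expect to require the most care is exactly this: Corollary~\ref{coro:spp1} does \emph{not} rule out the symmetric ansatz $p_1=p_2=p_4$, so the impossibility must be extracted from the additional ingredient of strict positivity of $P$ --- equivalently, from the Heisenberg uncertainty principle for the joint initial Gaussian state $\rho_0\otimes\hat\rho_0$ --- and getting that ingredient stated cleanly, as opposed to the routine bookkeeping with the earlier corollaries, is where the real content of the proof lies.
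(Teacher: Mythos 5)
Your main argument is not the paper's, and as written it has a genuine gap. The paper's proof treats $p_1,p_2,p_4$ as the entries of the \emph{stationary} solution $P=\lim_{t\to\infty}C_{e(t)}$ of the Riccati equation: from the algebraic equations~\eqref{eq:consric} (using $a_4=-a_1$, which follows from physical realizability of the plant) it deduces that $p_1=p_2=p_4$ forces $a_3-a_2=2a_1$; it then notes that $PC^T=0$ so $K=B'$, that $B'C=B'JB'^TJ=0$ so the filter matrix is $A-KC=A$, and that $\det(A-\lambda I)=\lambda^2-(a_1+a_2)^2$, so $A$ can never be Hurwitz --- hence no stable least mean squares estimator exists. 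Your determinant/positivity obstruction only addresses $t=0$: the Heisenberg bound on $C_{e(0)}$ indeed rules out $p_1(0)=p_2(0)=p_4(0)$, but it says nothing about the stationary covariance, since a limit of positive definite matrices can be singular and the error commutator $[e(t),e(t)^T]=4iJ-[x(t),\hat x(t)^T]-[\hat x(t),x(t)^T]$ need not stay bounded away from zero once the estimator becomes correlated with the plant. A telling symptom is that your main argument never uses the hypotheses $b_1=b_3$, $b_2=b_4$, $d_1=d_3$, $d_2=d_4$ at all --- it would ``prove'' the same impossibility for every plant, which is not the content of this corollary.

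Your complementary computation is actually the beginning of the paper's argument: the three components of~\eqref{eq:rc} under the ansatz $p_1=p_2=p_4=p$ are compatible only if $a_1+a_2=a_3+a_4$, i.e. $a_3=2a_1+a_2$ once $a_4=-a_1$ is imposed. But you stop at calling this relation ``non-generic,'' which is not a contradiction --- a plant satisfying it could still exist. The missing step is the Hurwitz obstruction above: with $K=B'$ and $B'C=0$ the closed-loop matrix is $A$ itself, whose eigenvalues are $\pm(a_1+a_2)$, so it has a nonnegative real eigenvalue for any choice of parameters. Adding that step (and dropping the $t=0$ positivity argument, or retaining it only to exclude $p\equiv 0$) would turn your complementary computation into essentially the paper's proof.
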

\begin{proof}
By Corollary~\ref{coro:spp1}, we know that if $b_1=b_3,$ $b_2=b_4,$ and $\det(B'')=0,$ then the physical realizability condition~\eqref{eq:phntwo} implies that $p_1+p_4=2p_2.$ Thus, when $p_1=p_2=p_4,$ this condition is satisfied.

However, note that a least mean squares estimator should satisfy Equation~\eqref{eq:rc}. Also, we should take into account the facts that $b_1=b_3,$ $b_2=b_4,$ $d_1=d_3,$ and $d_2=d_4.$ Therefore, the steady state solution $P=\lim_{t\to\infty} C_{e(t)}$ of the Riccati equation~\eqref{eq:rc}, if there exists, should satisfy the following 
\begin{align}~\label{eq:consric}
2(a_1p_1+a_2p_2)-(b_1^2+b_2^2)(p_1-p_2)^2+d_1^2+d_2^2&=0\nonumber\\
a_2p_4+a_3p_1+(p_1-p_2)(p_4-p_2)+d_1^2+d_2^2&=0\nonumber\\
2(a_3p_2-a_1p_4)-(b_1^2+b_2^2)(p_2-p_1)^2+d_1^2+d_2^2&=0,
\end{align}
where we have used $a_4=-a_1,$ since by physical realizability of the plant, $A$ should satisfy the following
$$AJ+JA^T=0.$$
We can observe that if $p_1=p_2=p_4,$ the matrix $A$ should have the following form
\begin{equation}~\label{eq:forma}
A=\begin{pmatrix}
a_1&a_2\\
a_3&-a_1
\end{pmatrix},\quad \textrm{with}\quad a_3-a_2=2a_1.
\end{equation}

 Note that in this case $K=B',$ since $PC^T=0.$ Also, we know that $A-KC=A-B'C$ should be a Hurwitz matrix (see e.g.,~\cite{kwakernaak1972linear,speyer2008stochastic}). However, we have  $A-B'C=A,$ since $B'C=B'JB'^TJ=0.$ Now, it remains to show that $A$ with its particular form given in~\eqref{eq:forma} could not be a Hurwitz matrix. It is sufficient to find the eigenvalues of $A.$ We have 
$$\det(A-\lambda I_{2\times 2})=-a_1^2+\lambda^2-a_3 a_2=0,$$ with $a_3-a_2=2a_1.$ This implies that 
$\lambda^2=(a_1+a_2)^2.$
Now, it is clear that $A$ could not be a Hurwitz matrix, i.e., all of its eigenvalues have negative real parts. 
\end{proof}
This result shows that in order to obtain conditions on $B,$ which make the linear least mean squares estimator given in Theorem~\ref{thm:main} physically realizable (for e.g., see Equation~\eqref{eq:phntwo}), we need to suppose some constraints on $P.$ This demonstrates the difficulty to find an appropriate plant whose least mean squares estimator is physically realizable.
\subsubsection*{Case $\bf 2:$ $\bf B'=0$} 
Let us announce the following corollary for this special case.
\begin{coro}~\label{coro:case2}
\rm If $B'=0.$ Then, the estimator of the form~\eqref{eq:qkalman} is a physical realizable least mean squares estimator if and only if  
\begin{itemize}
\item[(I)] For canonical $\Theta,$ we have

$K=0,$ and 
$B''\textrm{diag}_{\frac{n_w-n_y}{2}}(J)B''^T=0;$

\item [(II)] For degenerate canonical $\Theta=\textrm{diag}\left(0_{n'\times n'},\textrm{diag}_{\frac{n-n'}{2}}(J)\right),$ we have
\begin{itemize}
\item [(i)] If $\textrm{diag}\left(0_{n'\times n'},\textrm{diag}_{\frac{n-n'}{2}}(I)\right)C^T=C^T$, then

$K=0,$ and
$B''\textrm{diag}_{\frac{n_w-n_y}{2}}(J)B''^T=0.$
\item [(ii)] 
If $\textrm{diag}\left(0_{n'\times n'},\textrm{diag}_{\frac{n-n'}{2}}(I)\right)C^T\neq C^T,$ then
$K=PC^T,$ and

$-B''\textrm{diag}_{\frac{n_w-n_y}{2}}(J)B''^T+PC^T\textrm{diag}_{\frac{n_y}{2}}(J)CP=0,$ with $P$ satisfying
\begin{align}~\label{eq:riccase3}
\dot{P}(t)&=AP(t)+P(t)A^T-P(t)(C^TC)P(t)+B''B''^T,\nonumber\\
P(0)&=C_{e(0)}.
\end{align}
\item [(iii)] Let us write $C=[C'_{n_y\times n'} \quad C''_{n_y\times (n-n')}].$ Then,
if $$\textrm{diag}\left(0_{n'\times n'},\textrm{diag}_{\frac{n-n'}{2}}(I)\right)C^T\neq C^T$$ and $C'^T\textrm{diag}_{\frac{n_y}{2}}(J)C'=0,$ then
$K=PC^T,$ and

$B''\textrm{diag}_{\frac{n_w-n_y}{2}}(J)B''^T=0,$ with $P$ satisfying the Riccati Equation~\eqref{eq:riccase3}.
\end{itemize}
\end{itemize}
\end{coro}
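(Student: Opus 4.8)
The plan is to take the general characterization of Theorem~\ref{thm:phys} as the starting point, feed in the hypothesis $B'=0$, and then read off what physical realizability of the plant forces on $C$; from that point the three cases are essentially a dictionary lookup.

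First I would substitute $B=[\,0\quad B''\,]$ into the realizability condition~\eqref{eq:prc}. Since the standing conventions make $n_y$ and $n_w$ even, the matrix $\textrm{diag}_{\frac{n_w}{2}}(J)$ is block diagonal with diagonal blocks $\textrm{diag}_{\frac{n_y}{2}}(J)$ and $\textrm{diag}_{\frac{n_w-n_y}{2}}(J)$, so $B\textrm{diag}_{\frac{n_w}{2}}(J)B^T=B''\textrm{diag}_{\frac{n_w-n_y}{2}}(J)B''^T$, and every remaining term of~\eqref{eq:prc} carries a factor $B'$ and hence vanishes. Thus~\eqref{eq:prc} collapses to $-B''\textrm{diag}_{\frac{n_w-n_y}{2}}(J)B''^T+PC^T\textrm{diag}_{\frac{n_y}{2}}(J)CP=0$, the Riccati equation~\eqref{eq:ric} collapses to~\eqref{eq:riccase3}, and the optimal gain $K=BD^T+PC^T=B'+PC^T$ from~\eqref{eq:K-optimal} becomes $K=PC^T$. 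Everything now hinges on controlling $C$.

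Next I would invoke the plant's physical realizability: the second relation in~\eqref{eq:pr} reads $BD^T=\Theta C^T\textrm{diag}_{\frac{n_y}{2}}(J)$, and because $BD^T=B'=0$ with $\textrm{diag}_{\frac{n_y}{2}}(J)$ invertible, we get $\Theta C^T=0$. If $\Theta$ is canonical it is invertible, so $C^T=0$, hence $C=0$, $K=0$, and the reduced constraint is $B''\textrm{diag}_{\frac{n_w-n_y}{2}}(J)B''^T=0$; this is case~(I). If $\Theta=\textrm{diag}(0_{n'\times n'},\textrm{diag}_{\frac{n-n'}{2}}(J))$ is degenerate canonical, $\Theta C^T=0$ annihilates only the lower $(n-n')$ rows of $C^T$, i.e.\ it says $\textrm{diag}(0_{n'\times n'},\textrm{diag}_{\frac{n-n'}{2}}(I))C^T=0$; equivalently, writing $C=[\,C'\quad C''\,]$, it says $C''=0$. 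Combining this with the hypothesis $\textrm{diag}(0_{n'\times n'},\textrm{diag}_{\frac{n-n'}{2}}(I))C^T=C^T$ of (i) forces $C^T=0$, so $K=0$ and $B''\textrm{diag}_{\frac{n_w-n_y}{2}}(J)B''^T=0$. If instead $\textrm{diag}(0_{n'\times n'},\textrm{diag}_{\frac{n-n'}{2}}(I))C^T\neq C^T$, then $C^T\neq0$, $K=PC^T$, and the reduced constraint is exactly $-B''\textrm{diag}_{\frac{n_w-n_y}{2}}(J)B''^T+PC^T\textrm{diag}_{\frac{n_y}{2}}(J)CP=0$ with $P$ solving~\eqref{eq:riccase3}, which is (ii). For (iii), since $C''=0$ we have $C=[\,C'\quad 0\,]$; partitioning $P$ conformally and multiplying out, every block of $PC^T\textrm{diag}_{\frac{n_y}{2}}(J)CP$ has the form $P_{\bullet 1}C'^T\textrm{diag}_{\frac{n_y}{2}}(J)C'P_{1\bullet}$, so the whole quadratic term vanishes once $C'^T\textrm{diag}_{\frac{n_y}{2}}(J)C'=0$, and the constraint again reduces to $B''\textrm{diag}_{\frac{n_w-n_y}{2}}(J)B''^T=0$ with $K=PC^T$. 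Because Theorem~\ref{thm:phys} is an equivalence and every reduction above is reversible given $B'=0$ and a physically realizable plant, the listed conditions are simultaneously necessary and sufficient.

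The step I expect to be most delicate is the degenerate canonical bookkeeping: pinning down that $\Theta C^T=0$ is exactly the statement $C''=0$ (vanishing of the canonical block of $C^T$), and then propagating this through the conformal block partition of $P$ so that the quadratic term $PC^T\textrm{diag}_{\frac{n_y}{2}}(J)CP$ visibly reduces to an expression built from $C'^T\textrm{diag}_{\frac{n_y}{2}}(J)C'$. The algebra is routine, but it must be carried out carefully so that the three sub-items of~(II) align precisely with the hypotheses under which each is claimed to hold.
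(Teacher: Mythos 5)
Your proposal is correct and follows essentially the same route as the paper: substitute $B'=0$ into the general realizability condition of Theorem~\ref{thm:phys}, use the plant's physical realizability ($BD^T=\Theta C^T\mathrm{diag}_{\frac{n_y}{2}}(J)$) to conclude $\Theta C^T=0$, and then case-split on $\Theta$ canonical versus degenerate canonical, with $C''=0$ driving sub-case (iii). The only cosmetic difference is that the paper reaches part (I) by specializing Corollary~\ref{coro:phys1} while you work directly from Theorem~\ref{thm:phys}; your explicit block computation showing $PC^T\mathrm{diag}_{\frac{n_y}{2}}(J)CP$ reduces to terms in $C'^T\mathrm{diag}_{\frac{n_y}{2}}(J)C'$ is a welcome elaboration of a step the paper leaves implicit.
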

\begin{proof}
If $\Theta$ is canonical, then we find $K=0,$ since $B'=0$ implies $C=0$ by physical realizability conditions given in Theorem~\ref{thm:prp} (Equation~\eqref{eq:pr}). Then, we can use the results of Corollary~\ref{coro:phys1}, where by replacing $B'=0$ in Equations~\eqref{eq:prc1m} and~\eqref{eq:rc}, we find the conditions given in part (I).

However, if $\Theta$ is degenerate canonical, $C$ is not necessarily zero if $C\Theta=0.$ In this case, we find $K=PC^T.$ 
If $\textrm{diag}\left(0_{n'\times n'},\textrm{diag}_{\frac{n-n'}{2}}(I)\right)C^T= C^T,$ then $C=0.$ This proves the results given in (i) of part (II).

But if $\textrm{diag}\left(0_{n'\times n'},\textrm{diag}_{\frac{n-n'}{2}}(I)\right)C^T\neq C^T,$ then we have to replace $B'=0$ in Equations~\eqref{eq:prc2m} and~\eqref{eq:ric}, but $C$ is not necessarily zero. 
This proves the conditions $(ii)$ of part (II). The condition $(iii)$ in part (II) can be derived from condition $(ii).$ Also, by noting that $C\Theta=0,$ implies $C''=0.$
\end{proof}
Note that $B'=0$ implies $C\Theta=0.$ Roughly speaking, when $C\Theta=0,$ the non-commutative filter obtained in the above theorem, could also be realized with Homodyne  or Hetrodyne detection. Since, no quantum information is transferred from the plant to the filter in this case. This is like the classical filtering cases of Homodyne or Heterodyne detection, where one always ends up taking a single quadrature of the field.
\subsection{Consistency with standard results}
In this subsection, we recall the standard results, i.e., Belavkin-Kalman and classical Kalman filtering. They can be respectively considered as special cases when the output is commutative but the plant's dynamics is non-commutative and when the output and the plant's dynamics are both commutative.
\paragraph{Non-commutative dynamics, commutative (classical) outputs.} It can easily be shown that the least mean squares estimators found in Theorem~\ref{thm:main} are reduced to Belavkin-Kalman filters~\cite{EB05} under the assumptions that Belavkin used, i.e., the commutativity of the outputs and the non-demolition property.

Take $\mathcal{Y}_t$ to be commutative, that is $y(t)$ is self-adjoint for each $t$ and $[y(t), y(s)^T]=0$ for all $s, t$.
By the Spectral Theorem, \cite[Theorem 3.3]{BHJ07}), $y(t)$ corresponds to a classical stochastic process, the measurement process. Such continuous measurement signal arise in Homodyne detection~\cite{WM10}. 

For the commutative outputs, we have the following correlation for the observation process $dy$
\begin{equation*}
dy(t)dy^T(t)=F_y\,dt,
\end{equation*}
with $F_y=I_{n_y\times n_y}.$ Note that we have the following relation between $F_w$ and $F_y,$
$$DF_w D^T=F_y,$$ with $D=[I_{n_y\times n_y} \quad 0_{n_y\times (n_w-n_y)}].$

\medskip

Therefore, $F_w$ takes the following form
\begin{align}~\label{eq:fw}
F_w=I_{n_w\times n_w}+i\textrm{diag}\left(0_{n_y\times n_y},\textrm{diag}_{\frac{n_w-n_y}{2}}(J)\right).
\end{align}

It is well established that the optimal filter satisfies the following dynamics 
\begin{equation}~\label{eq:fopt}
d\hat{x}(t)=A\hat{x}(t)dt+(BD^T+P(t)C^T)( dy(t) - C\hat{x}(t)dt),
\end{equation}
with $\widehat x(0)=\mathbb E[x(0)]=\bar x(0),$ and $$\Sigma_0=\frac{1}{2}\mathbb E\left[(x(0)-\bar x(0))(x(0)-\bar x(0))^T+\Big((x(0)-\bar x(0))(x(0)-\bar x(0))^T\Big)^T\right].$$

Remark that the variable $\widehat x$ has zero commutation relation for any $t\geq 0$, i.e., 
\begin{equation}~\label{eq:com}
\widehat x(t)\widehat x(t)^T-(\widehat x(t)\widehat x(t)^T)^T=0,
\end{equation}
which means that $\widehat x$ is a classical variable. 

The real symmetric matrix $P(t):=\frac{1}{2}\mathbb{E}\left[e\left(t\right)e\left(t\right)^{T}+(e\left(t\right)e\left(t\right)^{T})^T\right]$ satisfies the following Riccati equation
\begin{align*}~\label{eq:P}
\dot{P}\left(t\right)&=\left(A-K(t)C\right)P(t)+P(t)\left(A-K(t)C\right)^{T}
+(B-K(t)D)(B-K(t)D)^T\nonumber\\
P(0)&=\Sigma_0.
\end{align*}
\paragraph{Classical Kalman filtering.} The classical linear stochastic dynamics is described by classical variables as follows
\begin{eqnarray}~\label{eq:sd}
dx(t)&=A x(t)\,dt+B dw\nonumber\\
dy(t)&=C x(t)\,dt+D\, dw,
\end{eqnarray}
where $A,$ $B,$ $C$ and $D$ are real matrices in $\mathbb R^{n\times n},$ $\mathbb R^{n\times n_w}$ and $\mathbb R^{n_y\times n}$ and $\mathbb R^{n_y\times n_w},$ and $w$ is a vector of classical Wiener processes, with $dw(t)dw(t)^T=I_{n_w\times n_w}\,dt.$  Take $\mathcal Y(t)$ as the algebra generated by the observation processes previous to time $t,$ defined by $\mathcal Y(t):=\textrm{span}\{(y(s))_{0\leq s\leq t}\}.$

\medskip

It is well known that the classical Kalman filter~\cite{speyer2008stochastic,kwakernaak1972linear} satisfies the following dynamics
\begin{equation*}
d\widehat x(t)=A\widehat x(t)\,dt+(BD^T+P(t)C^T)(DD^T)^{-1}\left(dy(t)-C\widehat x(t)\,dt\right),
\end{equation*}
with $\widehat x(0)=\mathbb E[x(0)]=\bar x(0)$ and $\mathbb E[(x(0)-\bar x(0))(x(0)-\bar x(0))^T]=\Sigma_0.$

\medskip

The covariance of the error $P(t)=\EE{e(t)e^T(t)}$  satisfies the following Riccati equation
\begin{align}~\label{eq:ricatti}
\dot{P}\left(t\right)&=(A-KC)\,P(t)+P(t)\,(A-KC)^T
+\big(B(t)-K(t)D\big)\big(B(t)-K(t)D\big)^T,\nonumber\\
P(0)&=\Sigma_0.
\end{align}
Note that  for the classical Kalman filter, we have $$F_w=I_{n_w\times n_w},\quad F_y=I_{n_y\times n_y}\quad\textrm{and} \quad\Theta=0_{n\times n}.$$

\subsection{Construction of coherent observers with least mean squares estimators}

Suppose that the linear least mean squares estimator~\eqref{eq:qkalman} does not satisfy the constraints of physical realizability given in Theorem~\ref{thm:phys}. Then, we can allow additional vacuum noise inputs to the least mean squares estimator~\eqref{eq:qkalman} such that the
resulting system is physically realizable. Suppose that the following estimator is physically realizable
\begin{equation*}
d\tilde x=(A-KC)\tilde x+Kdy+bdv,
\end{equation*}
with $dvdv^T=F_v dt,$ where $
F_v=I_{n_v\times n_v}+i\textrm{diag}_{\frac{n_v}{2}}(J),$ and with $n_v$ positive even integer. Also, we suppose that $dv$ is independent of $dw.$ This estimator is called a coherent observer \cite{Miao2012}, since it tracks in average the plant dynamics~\eqref{eq:sdq} when $A-KC$ is Hurwitz, and is physically realizable. The error covariance matrix is defined by the following
\begin{equation*}
\tilde P(t):=C_{\tilde e}(t),
\end{equation*}
where $C_{\tilde e}$ is defined in Equation~\eqref{eq:symcov} and $\tilde e=x-\tilde x$. It is not difficult to show that the error covariance matrix satisfies the following Riccati equation
\begin{align}~\label{eq:rictilde}
\dot{\tilde P}(t)&=(A-B'C)\tilde P(t)+\tilde P(t)(A-B'C)^T-\tilde P(t)(C^TC)\tilde P(t)+B''B''^T+bb^T,\nonumber\\
\tilde P(0)&=C_{\tilde e(0)}.
\end{align}
The steady state solution of the above Riccati equation, if there exists, is given by $\tilde P=\lim_{t\to\infty} C_{\tilde e(t)}.$ Then, the performance can be defined by the following 
$$\tilde J=\tr{\tilde P}.$$
In the following, we give some examples. However, in this paper, we don't discuss different algorithms that can be considered to make least mean squares estimators physically realizable. We choose a matrix $b$ that can make the least mean squares estimator physically realizable and we compare the performance of the estimator $\tilde x$ with the least mean squares estimator $\hat x$ (see e.g.,~\cite{vuglar2014,Miao2012} for more details on different algorithms to design coherent observers).
 \subsection{Examples}
 In the following,  we give some examples from the literature to illustrate the results of this section. Also, these examples show the difficulty to find an example where construction of a physically realizable least mean squares estimator is feasible.
 \paragraph{Example 1.}
Consider an optical cavity of the form 
\begin{align*}
dx(t)&=-\kappa/2\,x(t)\,dt-\sqrt{\kappa}dw(t)\\
dy(t)&=\sqrt{\kappa}\,x(t)\,dt+dw(t),
\end{align*}
where $dw(t)dw(t)^T=(I_{2\times 2}+iJ)\,dt$ and $\Theta=J.$ Therefore, we have $[dy(t),dy^T(t)]=2J\,dt,$ i.e., the output processes are non-commutative.

Take $P=\begin{pmatrix}p_1&p_2\\
p_2&p_4\end{pmatrix}.$ Note that for $\kappa> 0$ arbitrary,  
we get the following Riccati equation
\begin{align*}
\kappa p_1-\kappa p_1^2-\kappa p_2^2&=0,\\
\kappa p_2-\kappa p_1 p_2-\kappa p_2 p_4&=0\\
-\kappa p_2^2+\kappa p_4-\kappa p_4^2&=0.
\end{align*}
This implies that $P=I_{2\times 2}$ and $K=0_{2\times 2}.$ Therefore, we get the following estimator
\begin{align}~\label{eq:llmc}
d\hat x(t)&=-\kappa/2\,\hat x(t)\,dt.
\end{align}
This estimator seems trivial as there is no $dy$ term in the dynamics of the estimator. So no information from the system is used to compute the estimate. As a consequence, it does not matter if $y$ be a commutative or non-commutative process, since $K=0_{2\times 2},$ and $dy,$ it does not appeared in  the dynamics of the estimator. However, note that the estimator is physically realizable if and only if $\kappa=0,$ since $\hat x$ is a process with the commutation $\Theta=J.$ Also, remark that $\kappa=0$ means that the system would be decoupled from the field. Particularly, the estimator~\eqref{eq:llmc} is not useful in practice, as there is no $dy$ term. Hence, there is no interest to make it physically realizable (when $\kappa\neq 0$) by adding some vacuum noises.  
\paragraph{Example 2.} 
Now consider a dynamic squeezer. This is an optical cavity with a non-linear element inside. After appropriate linearizations, an optical squeezer can be described by the
following QSDE (see e.g.,~\cite{gardiner2004quantum,roy2015coherent}) if we assume that $\chi=\chi_r+i\chi_i,$ and $\chi_r=0$,
\begin{align*}
dx&=\begin{pmatrix}
-\frac{1}{2}(\kappa_1+\kappa_2)&-\chi_i\\
-\chi_i&-\frac{1}{2}(\kappa_1+\kappa_2)
\end{pmatrix}xdt-\sqrt{\kappa_1}dw_1-\sqrt{\kappa_2}dw_2\\
dy&=\sqrt{\kappa_1}xdt+dw_1,
\end{align*}
where $dw_1(t)dw_1(t)^T=(I_{2\times 2}+iJ)\,dt,$ $dw_2(t)dw_2(t)^T=(I_{2\times 2}+iJ)\,dt,$ and $\Theta=J.$ We  have the following commutation relations for the output processes, $[dy(t),dy^T(t)]=2J\,dt.$ 

For any arbitrary parameters $\kappa_1\geq 0,$ $\kappa_2\geq 0,$ and $\chi_i,$ reals, the physical realizability constraint~\eqref{eq:prc} is satisfied if and only if 
\begin{equation}~\label{eq:ex2p}
2 \kappa_1 - \kappa_2 - 2 \kappa_1 p_1 - \kappa_1 p_2^2 - 2 \kappa_1 p_4 + \kappa_1 p_1 p_4=0,
\end{equation}
with $P=\begin{pmatrix}
p_1&p_2\\
p_2&p_4
\end{pmatrix}.$ The matrix $P$ should satisfy the Riccati Equation~\eqref{eq:ric}, which becomes  
\begin{align}~\label{eq:ex2ric}
\kappa_2 + (\kappa_1 - \kappa_2) p_1 - \kappa_1 p_1^2 - \kappa_1 p_2^2 - 2 p_2 \chi_i&=0\nonumber\\
(\kappa_1 - \kappa_2) p_2 - \kappa_1 p_1 p_2 - \kappa_1 p_2 p_4 - p_1 \chi_i - p_4 \chi_i&=0\nonumber\\
\kappa_2 - \kappa_1 p_2^2 + (\kappa_1 -  \kappa_2) p_4 - \kappa_1 p_4^2 - 2 p_2 \chi_i&=0.
\end{align} 

 If we take $B'=0$ (with previous notation), i.e., $\kappa_1=0,$ the physically realizable constraint~\eqref{eq:ex2p} is satisfied if and only if $\kappa_2=0.$ This means that both field channels are decoupled from the system. Moreover, if we take $\kappa_1=\kappa_2=0,$ the Riccati Equation~\eqref{eq:ex2ric} has not a unique solution. Also, for $\kappa_2\geq 0$ and $\kappa_1>0,$ the physical realizability condition given in~\eqref{eq:ex2p} imposes a constraint on the form of $P.$ This shows the restrictiveness of physical realizability constraints. 

Now take $\kappa_1=0.1,$ $\kappa_2=0.2,$ and $\chi_i=0.01.$ In this case, we find $P=\begin{pmatrix}
1.0030  & -0.0667\\
   -0.0667  &  1.0030
\end{pmatrix},$ and $K=\begin{pmatrix}
0.0009 &  -0.0211\\
   -0.0211   & 0.0009
\end{pmatrix}.$
Therefore, we get the following least mean squares estimator,
\begin{equation}~\label{eq:ex2}
d\hat x=\begin{pmatrix}
-0.1503 &  -0.0033\\
   -0.0033 &  -0.1503
\end{pmatrix}\hat x dt+\begin{pmatrix}
0.0009  & -0.0211\\
   -0.0211  &  0.0009
\end{pmatrix}dy
\end{equation}
which is not physically realizable.  We have $J(K)=\mathrm {Tr}(P)=2.0060.$

Obviously, we can make the least mean squares estimator~\eqref{eq:ex2} physically realizable by adding a vacuum noise as follows
\begin{equation*}
d\tilde x=\begin{pmatrix}
-0.1503 &  -0.0033\\
   -0.0033 &  -0.1503
\end{pmatrix}\tilde x dt+\begin{pmatrix}
0.0009  & -0.0211\\
   -0.0211  &  0.0009
\end{pmatrix}dy+\begin{pmatrix}
0.5486    &     0\\
         0  &  0.5486
         \end{pmatrix}dv,
\end{equation*}
with $dv(t) dv(t)^T=(I_{2\times 2}+iJ)dt.$ We find $\tilde P=\begin{pmatrix}
1.7955   &-0.0782\\
   -0.0782  &  1.7955
\end{pmatrix}.$ Therefore, $\tilde J(K)=\mathrm {Tr}(\tilde P)=3.5910.$ Remark that the form of the estimator in above is not unique. We recall that the study of different algorithms to design coherent observers is beyond the scope of this paper.  
 \paragraph{Example 3.}
    Consider a degenerate parametric amplifier (DPA) described as follows (in the quadrature representation)
    \begin{align*}
dx&=\begin{pmatrix}
-\frac{1}{2}\kappa+\epsilon_r&\epsilon_i\\
\epsilon_i&-\frac{1}{2}\kappa-\epsilon_r
\end{pmatrix}xdt-\sqrt{\kappa}dw\\
dy&=\sqrt{\kappa}xdt+dw.
\end{align*}
Here, $\Theta=J,$ $dw(t)dw^T(t)=(I_{2\times 2}+iJ)dt,$ then, $[dy(t),dy^T(t)]=2J\,dt.$ Suppose $\kappa\geq 0,$ $\epsilon_r,$ and $\epsilon_i$ are reals. Also, take $P=\begin{pmatrix}
p_1&p_2\\
p_2&p_4
\end{pmatrix}.$ Then, $P$ satisfies the following Riccati equation obtained from Equation~\eqref{eq:ric},
\begin{align}~\label{eq:ex3r}
(2\epsilon_r + \kappa)p_1 - \kappa p_1^2 + 2\epsilon_i p_2 - \kappa p_2^2&=0,\nonumber\\
\epsilon_i p_1 +\kappa p_2  - 
 \kappa p_1 p_2 + \epsilon_i p_4 -\kappa p_2 p_4&=0\nonumber\\
2\epsilon_i p_2 - \kappa p_2^2 + (\kappa - 2\epsilon_r ) p_4 - \kappa p_4^2&=0.
\end{align}
The physical realizability of the least mean squares estimator is satisfied if $\kappa=0,$ since the constraint~\eqref{eq:prc} becomes 
$$2 \kappa - 2 \kappa p_1 - \kappa p_2^2 - 2 \kappa p_4 + \kappa p_1 p_4=0.$$
Once again, if $\kappa=0,$ the system would be decoupled from the field. Moreover, in this case, the Riccati equation~\eqref{eq:ex3r} has not a unique solution. Moreover, if $\kappa>0,$ the physical realizability condition in above imposes a constraint on the form of $P.$ This example also, illustrates the difficulty to find the least mean squares estimator which is physically realizable.

Now take the following parameters: $\kappa=0.1,$ $\epsilon_r=0.01,$ and $\epsilon_i=0.01.$ In this case, we get $P=\begin{pmatrix}
1.2000  &  0.2000\\
    0.2000 &   0.8000\end{pmatrix}$ and $K=\begin{pmatrix}
 0.0632&    0.0632\\
    0.0632&   -0.0632
\end{pmatrix}.$ The performance of the least mean squares estimator is given by $J(K)=\mathrm{Tr}(P)=2.$
    
The linear least mean squares estimator has the following form
\begin{equation*}
d\hat x=\begin{pmatrix}
-0.0600  & -0.0100\\
   -0.0100  & -0.0400
\end{pmatrix}\hat x dt+\begin{pmatrix}
 0.0632&    0.0632\\
    0.0632&   -0.0632\end{pmatrix}dy,
\end{equation*}
which is not  physically realizable. We can make this estimator physically realizable as follows 
\begin{equation*}
d\tilde x=\begin{pmatrix}
-0.0600  & -0.0100\\
   -0.0100  & -0.0400
\end{pmatrix}\tilde x dt+\begin{pmatrix}
 0.0632&    0.0632\\
    0.0632&   -0.0632\end{pmatrix}dy+\begin{pmatrix}
   0.3286   &      0\\
         0  &  0.3286
\end{pmatrix}dv,
\end{equation*}
with $dv(t)dv(t)^T=(I_{2\times 2}+iJ)dt.$ We find $\tilde P=\begin{pmatrix}
1.8034  &  0.1431\\
    0.1431 &   1.5172\end{pmatrix}.$ Therefore $\tilde J(K)=\mathrm{Tr}(\tilde P)=3.3206.$
\paragraph{Example 4.} Consider the following plant 
\begin{align*}
dx&=\begin{pmatrix}
0&\Delta\\
-\Delta&0
\end{pmatrix}x\,dt+\begin{pmatrix}
0&0&0&0\\
0&-2\sqrt{\kappa_2}&0&-2\sqrt{\kappa_3}
\end{pmatrix}\begin{pmatrix}
dw_1\\
dw_2
\end{pmatrix},\\
dy&=\begin{pmatrix}
2\sqrt{\kappa_2}&0\\
0&0
\end{pmatrix}x\,dt+dw_1,
\end{align*}
with $\Theta=J,$ $F_w=I_{4\times 4}+i\textrm{diag}_{2}(J),$ and $F_y=I_{2\times 2}+iJ.$ Then, $[dy(t),dy^T(t)]=2J\,dt,$ which means that the output processes are non-commutative. This plant may
be thought of as representing the scenario of an atom trapped
between two mirrors of a three mirror cavity in the strong coupling
limit in which the cavity dynamics are adiabatically eliminated (see more details in~\cite{doherty1999feedback,gough2007singular}).

Here, $B'=\begin{pmatrix}0&0\\
0&-2\sqrt{\kappa_2}\end{pmatrix}.$ It is trivial that the condition $B'JB'^T=0$ is satisfied. We can easily show that the physical realizability constraint (condition~\eqref{eq:prc}) is reduced to \begin{equation}~\label{eq:ex4pr}
8\kappa_2 p_1=0,
\end{equation} with $P=\begin{pmatrix}p_1&p_2\\
p_2&p_4\end{pmatrix}$ satisfying the Riccati Equation~\eqref{eq:ric}, which takes the following form
\begin{align}~\label{eq:ex4ric}
\kappa_2 - 4 \kappa_2 p_1^2 + 2 \Delta p_2&=0\nonumber\\
-\Delta p_1 - 4 \kappa_2 p_1 p_2 + \Delta p_4&=0\nonumber\\
\kappa_2 - 2 \Delta p_2 - 4 \kappa_2 p_2^2+4\kappa_3&=0.
\end{align}

 So if $\kappa_2=0,$ the physical realizability condition~\eqref{eq:ex4pr} is satisfied. This means that the system should be decoupled from the field  channel $dw_1$. However, if $\kappa_2=0,$ for $\Delta\neq 0,$ the Riccati equation~\eqref{eq:ex4ric} has no solution if $\kappa_3\neq 0.$ Moreover, the Riccati equation~\eqref{eq:ex4ric} has not a unique solution if $\kappa_2=\kappa_3=0.$ Also, if $p_1=0,$ the condition~\eqref{eq:ex4pr} is satisfied. However, it is not difficult to show that there is no positive definite solution to  the Riccati equation above in this case. This illustrates once again the restrictive nature of the physical realizability conditions.

Now take $\kappa_2=\kappa_3=0.1$ and $\Delta=0.01.$ The linear least mean squares estimator takes the following form 
\begin{equation*}
d\hat x=\begin{pmatrix} -0.0883 &0.0100\\
 -0.4001&0
\end{pmatrix}\hat x dt+\begin{pmatrix}0.1397 &0\\0.6168  &-0.6325\end{pmatrix}dy.
\end{equation*}
This estimator is not physically realizable which is also conform with Corollary~\ref{coro:phys1}, since the condition~\eqref{eq:prc1m} is not satisfied in this example.

We can certainly add vacuum noise term $bdv$ (with $dv(t) dv(t)^T=(I_{2\times 2}+iJ)dt$) which is independent of $dw_1$  to the estimator above to make it physically realizable. Therefore, take the following estimator
\begin{equation*}
d\tilde x=\begin{pmatrix} -0.0883 &0.0100\\
 -0.4001&0
\end{pmatrix}\tilde x dt+\begin{pmatrix}0.1397 &0\\0.6168  &-0.6325\end{pmatrix}dy+\begin{pmatrix}
0.4204    &     0\\
         0  &  0.4204
\end{pmatrix}dv.
\end{equation*}
 Let us write the performance for estimators $\hat x$ and $\tilde x$ respectively as follows,
$P=\begin{pmatrix}
0.2208 &  0.9753\\
   0.9753  &  8.8359
\end{pmatrix},$ therefore, $J(K)=9.0567.$

For $\tilde x,$ we find $\tilde P=\begin{pmatrix}
0.7075&    1.1760\\
    1.1760 &  33.9878
    \end{pmatrix},$ then $\tilde J(K)=\tr{\tilde P}=34.6953.$
     \paragraph{Example 5.} Consider the following example which is borrowed from~\cite{wiseman1994all,gough2009series}
\begin{align*}
dx&=\gamma\begin{pmatrix}
-1-\cos(\theta)&\sin(\theta)\\
-\sin(\theta)&-1-\cos(\theta)
\end{pmatrix}xdt+\sqrt{\gamma}\begin{pmatrix}
-1-\cos(\theta)&-\sin(\theta)\\
\sin(\theta)&-1-\cos(\theta)
\end{pmatrix}dw\\
dy&=\sqrt{\gamma}\begin{pmatrix}
1+\cos(\theta)&-\sin(\theta)\\
\sin(\theta)&1+\cos(\theta)
\end{pmatrix}xdt+dw,
\end{align*}
where $dw(t)dw^T(t)=(I_{2\times 2}+iJ)dt,$ $\Theta=J,$ and $[dy(t),dy^T(t)]=2J\,dt.$
This is a simple example of all-optical feedback scheme where the light from one end of a cavity is taken and reflect
it back into the other. For simplicity, it is assumed a bath in the vacuum state and a cavity with equal transmitivities at both end-mirrors.

Our aim is to see whether appropriate parameters $\theta$ and $\gamma$ exist such that the linear least mean squares estimator becomes automatically physically realizable. The  matrix $P=\begin{pmatrix}
p_1&p_2\\
p_2&p_4
\end{pmatrix}$ should satisfy Riccati Equation~\eqref{eq:ric}. 
After some calculations, we get
\begin{align*}
2 (1 + \cos(\theta)) p_1+ 2 \sin(\theta) p_2 - (2 + 2 \cos(\theta)) (p_1^2 + p_2^2)& = 0, \\
  2 (1 + \cos(\theta)) p_2 + \sin(\theta) p_4 -p_1 \sin(\theta) - (2 + 2 \cos(\theta)) (p_1 p_2 + p_2 p_4) &= 0, \\
  2 (1 + \cos(\theta)) p_4 - 2 \sin(\theta) p_2 - (2 + 2 \cos(\theta)) (p_4^2 + p_2^2) &=0.
  \end{align*}
  Thus, we find $P=I_{2\times 2}$ and $K=0_{2\times 2}.$ Therefore, the linear least mean squares estimator has the following form
\begin{equation*}
 d\hat x=\gamma\begin{pmatrix}
-1-\cos(\theta)&\sin(\theta)\\
-\sin(\theta)&-1-\cos(\theta)
\end{pmatrix}\hat xdt,
 \end{equation*}
which is not an interesting estimator in practice as there is no $dy$ term, similar to Example $1.$

Now we find $\theta$ and $\gamma$ such that the least mean squares estimator proposed by Theorem~\ref{thm:main} be physically realizable. To do so, we should solve the following equation which comes from the physical realizability condition given in Equation~\eqref{eq:prc}
\begin{equation*}
\gamma\begin{pmatrix}
0&-2-2\cos(\theta)\\
2+2\cos(\theta)&0
\end{pmatrix}=0_{2\times 2}.
\end{equation*}
This equation is satisfied if $\gamma=0$ or (and)  $\theta=k\pi$ (with $k$ an odd number).  When $\gamma=0,$ this means that the coupling to the field is zero. Also, $\theta$ corresponds to the phase of the vacuum light which is picked up when reflected by the cavity mirror. So, when $\theta=k\pi,$ with $k$ an odd number, this means that the damping through the mirrors can be completely eliminated (see more details in~\cite{wiseman1994all}). Obviously, for these cases, we have $A=B=C=0,$ (with previous notations)  which is meaningless.

  \medskip

We have observed in the examples above, constructing physically realizable least mean squares estimators was impossible when $B'\neq 0$ or we should consider some constraints on the matrix $P$ which makes the problem hard and sometimes impossible to solve. This shows the restrictiveness of the physical realizability constraints. Also, when $B'=0,$ the physically realizable least mean squares estimators are not well defined. Supported by these examples and some others which are not given in this paper, we conclude that maybe it is impossible to find examples which could result in physically realizable least mean squares estimators without any additional quantum noises when $B'\neq 0$. (Note that the case $B'=0$ is not an interesting case, since it could also be realized with Homodyne or Hetrodyne detection, as mentioned before, below Corollary 6.) However, we couldn't show this in general case, maybe it is wrong. Also, note that finding examples is a hard problem since we should solve the quadratic equations in $P$ (Equation~\eqref{eq:ric}) where we obtain $P$ as a function of free parameters of the matrix $A,$ and $B.$ Then, these free parameters could  be determined by replacing $P$ in the physical realizability constraints (Equation~\eqref{eq:prc}).
\section{Conclusion}~\label{sec:six}
We have obtained non-commutative linear least mean squares estimators for linear QSDEs by extending Belavkin-Kalman filters to the case where the output processes are non-commutative.  We have assumed that these least mean squares estimators are given as a linear combination of innovation processes. Furthermore, we studied the physical realizability of such estimators for the general case and some special cases. 

We have observed that when $B'=0,$ it is more simple to construct a physically realizable least mean squares estimator, specially  for $\Theta$ degenerate canonical and when $C^T\textrm{diag}_{\frac{n_y}{2}}(J)C=0$. Since, in this case, the physical realizability condition does not depend on the form of $P$ (see more details in Corollary~\ref{coro:case2}). However, roughly speaking, for this case, the non-commutative filter could also be realized by Homodyne or Hetrodyne detection as $C\Theta=0.$ In general, finding examples which satisfy physical realizability conditions, it is difficult without any assumptions on $P.$ These assumptions create constraints on their associated Riccati equations (see e.g., Theorem~\ref{thm:phys} and Corollaries~\ref{coro:phys1}-~\ref{coro:case2}). Moreover, based on our observations, we can conclude that maybe, the construction of a physically realizable least mean squares estimator without any additional quantum noises is impossible when $B'\neq 0$. Generally speaking, the results presented here show the restrictive nature of physical realizability conditions.

Indeed, this work does not show that the best estimate based on the knowledge of the non-commutative output processes, and under the constraints of the physical realizability, has the form of the proposed linear estimator~\eqref{eq:qkalman}. Further research is required to solve the optimal filtering problem under the non-convex constraints imposed by physical realizability conditions. Furthermore, the optimal filtering problem when the coherent controllers are added into the plant's dynamics (see e.g.,~\cite{NJP09,hamerly2013coherent}) can be considered as a future research plan.

\vspace{0.4cm}

\noindent\bf{Acknowledgments.} \rm The authors gratefully acknowledge Professor T. Duncan for helpful discussions. This work was supported by the Australian Research Council Centre of Excellence for Quantum Computation and Communication Technology (Project No. CE110001027) and Air Force Office of Scientific Research (AFOSR) under Grant No. FA2386-12-1-4075.
The first author would like to thank Ryan Hamerly and Nikolas Tezak for valuable discussions. Nina H. Amini acknowledges the support of Math+X Postdoctoral Fellowship from the
Simons Foundation.

\bibliographystyle{plain} 
\bibliography{ref}

\end{document}